\theoremstyle{remark}
\newtheorem{theorem}{{\sc Theorem}}
\newtheorem{prop}{Proposition}
\newtheorem{lemma}{{\sc Lemma}}
\newtheorem{remark}{{\sc Remark}}
\def\marginnote#1{\setbox0=\vtop{\hsize4pc \small\raggedright				\noindent\baselineskip9pt \rightskip=0.5pc plus 1.5pc #1}				\leavevmode\vadjust{\dimen0=\dp0 \kern-\ht0\hbox{\kern-4.00pc			\box0}\kern-\dimen0}}
\def\boxit#1{\vbox{\hrule\hbox{\vrule\kern6pt
          \vbox{\kern6pt#1\kern6pt}\kern6pt\vrule}\hrule}}
\newcommand{\bSigma}{\boldsymbol{\Sigma}}
\newcommand{\suit}[1]{\left(#1\right)}
\newcommand{\abs}[1]{\left\vert#1\right\vert}
\newcommand{\bra}[1]{\left[#1\right]}
\newcommand{\set}[1]{\left\{#1\right\}}
\newcommand{\df}{asymptotically unbiased distributed estimator }
\newcommand{\ldf}{Asymptotically unbiased distributed estimator }
\newcommand{\dfno}{asymptotically unbiased distributed estimator}
\begin{document}

\centerline{\large\bf Adapting the Hill estimator to distributed inference:  }
\vspace{2pt} 
\centerline{\large\bf dealing
with the bias}
\vspace{.4cm}

\centerline{Liujun Chen$^1$, Deyuan Li$^1$ and Chen Zhou$^{2,3}$} 

\begin{abstract}
The distributed Hill estimator is a divide-and-conquer algorithm for estimating the extreme value index when data are stored in multiple machines. In applications, estimates based on the distributed Hill estimator can be sensitive to the choice of the number of the exceedance ratios used in each machine. Even when choosing the number at a low level, a high asymptotic bias may arise. We overcome this potential drawback by designing a bias correction procedure for the distributed Hill estimator, which adheres to the setup of distributed inference. The asymptotically unbiased distributed estimator we obtained, on the one hand, is applicable to  distributed  stored data, on the other hand, inherits all known advantages of bias correction methods in extreme value statistics.

\end{abstract}

\noindent%
{\bf Keywords:}  Extreme value index, Distributed inference, Bias correction

\medskip

\let\thefootnote\relax\footnotetext{Liujun Chen}
\footnotetext{ljchen19@fudan.edu.cn}
\footnote{}
\footnotetext{Deyuan Li}
\footnotetext{deyuanli@fudan.edu.cn}
\footnote{}
\footnotetext{Chen Zhou}
\footnotetext{zhou@ese.eur.nl}

\footnotetext{\vspace{3ex}}
\footnote{\noindent $^1$\quad School of Management, Fudan University, 220 Handan Road, Shanghai 200433, P.R. China.}
\footnotetext{}
\footnote{$^2$\quad Erasmus School of Economics, Erasmus University Rotterdam, P.O. Box 1738, 3000DR Rotterdam, The
Netherlands.}
\footnotetext{}
\footnote{$^3$\quad Economic Policy and Research Division, Bank of The Netherlands, P.O. Box 
98, 1000AB Amsterdam, The Netherlands.}

 \section{Introduction}

Consider a distribution function $F$ which belongs to the maximum domain of attraction of an extreme value distribution with a positive {\it extreme value index} $\gamma>0$,  that is,
$$
	\lim_{t\to \infty}\frac{U(tx)}{U(t)} = x^{\gamma}, \ x>0,
$$
where $U(t):=F^{\leftarrow}\suit{1-1/t}$ with $t>1$,
 and $ ^{\leftarrow}$ denotes the left-continuous inverse function. Such a distribution is also called a heavy-tailed distribution, where the  extreme value index  governs the tail of the distribution.  Estimating the extreme value index is a key step for making statistical inference on the  tail behaviour of $F$. Various methods have been proposed to estimate the extreme value index, such as the Hill estimator \citep{hill1975simple} , the maximum likelihood estimator \citep{smith1987estimating,drees2004maximum,zhou2009existence}  and the moment estimator \citep{dekkers1989moment}. 

Conducting extreme value analysis often requires large datasets in order to select extreme observations in the tail. 
Such datasets may be stored in multiple machines and cannot be combined into one dataset due to data privacy issue. For example, datasets collected in industries such as banking and  healthcare require high level consumer privacy and cannot be shared across different organizations. Another potential situation is that some massive datasets cannot be processed by a single computer due to internet traffic or memory constraints.  {\it Distributed inference} refers to the statistical problem of analyzing data stored in multiple machines. It often  requires a  divide-and-conquer (DC) algorithm. In a DC algorithm, one calculates 
statistical estimators on each machine in parallel and then communicates  them to a central machine. The final estimator is obtained on the central machine,
often  by a simple average; see, for example,  \citet{li2013statistical} for kernel density estimation, \citet{fan2019distributed} for principal component analysis, \citet{volgushev2019distributed} for quantile regression.

In this paper, we aim at estimating the extreme value index in the distributed inference context.
Assume that independent and identically distributed (i.i.d.) observations $X_1,\dots,X_N$ drawn from $F$  are stored in  $m$  machines  with $n$ observations on each machine, i.e. $N=mn$. In the context of distributed inference, we assume that only limited (finite) number of results can be transmitted from each machine to the central machine.
As a result, we cannot apply statistical procedures to the oracle sample, i.e., the hypothetically combined dataset $\set{X_1,\dots,X_N}$.

 \citet*{chen2021distributed}  proposes the  distributed Hill estimator to estimate the extreme value index $\gamma$. On each machine, the  Hill estimator is applied and then transmitted to the central machine.  On the central machine,  the average of the Hill estimates collected   from the $m$ machines are calculated. 
Let $M_j^{(1)}\ge \cdots \ge M_j^{(n)}$ denote the order statistics of the observations on machine $j$ for $j=1,\dots,m$. Then the Hill estimator on machine $j$ can be constructed by using the top $k$ exceedance ratios $M_j^{(i)}/M_j^{(k+1)}, i=1,\dots,k$, as 
$$
\hat{\gamma}_{j,k}=\frac{1}{k}\sum_{i=1}^k \suit{\log M_j^{(i)}-\log M_j^{(k+1)}}, \quad j=1,\dots,m.
$$
The distributed Hill estimator  is  defined as
$$
	\hat{\gamma}_{DH,k}:=\frac{1}{m}\sum_{j=1}^m \hat{\gamma}_{j,k}=\frac{1}{m}\sum_{j=1}^m  \frac{1}{k}\sum_{i=1}^k \suit{\log M_j^{(i)}-\log M_j^{(k+1)}}.
$$

\citet{chen2021distributed} studies the asymptotic behaviour of the distributed Hill estimator  and shows  sufficient conditions under which the distributed Hill estimator possesses the oracle property: its speed of convergence and asymptotic distribution coincides with the oracle Hill estimator. Here, the oracle Hill estimator is the Hill estimator using  the top  $km$ exceedance ratios of the oracle sample $\set{X_1,\dots,X_N}$, i.e. $\hat{\gamma}=l^{-1}\sum_{i=1}^{l}\suit{\log M^{(i)}-\log M^{(l+1)}}$, where $l=km$ and  $M^{(1)} \ge \cdots \ge M^{(N)}$ are the order statistics of the oracle sample $\set{X_1,\dots,X_N}$.  The choice of $l=km$ is in line with the standard distributed inference literature. Note that the oracle property compares the distributed estimator to the oracle estimator when the two estimators are constructed based on the same sample size. Different from standard statistics, extreme value statistics use observations in the tail only, for example, the Hill estimator is based on the exceedance ratios. Therefore, the oracle property for the Hill estimator is meaningful only if the distributed estimator and the oracle estimator are constructed based on the same number of exceedance ratios.

In applications with  finite sample size, one important tuning parameter in the Hill estimator is  the number of exceedance ratios $l$ used in the estimation.  Recall that the maximum domain of attraction condition is a limiting relation instead of an exact model, it provides only an approximation to the tail of a distribution. Consequently, the number of exceedance ratios used in the estimation, $l$, is related to the asymptotic bias in the limit distribution of the Hill estimator. This differs from classical statistics where bias often vanishes sufficiently fast as sample size tending to infinity. More specifically, the choice leads to a bias-variance tradeoff: with a low level of $l$, the estimation variance is at a high level; by increasing the level of $l$, the estimation variance is reduced but the estimation bias may arise. For the distributed Hill estimator $\hat{\gamma}_{DH,k}$, this issue is regarding the choice of $k$ on each machine. One needs to balance the number of exceedance ratios ($k$) with the number of machines ($m$), in order to control the total bias in the distributed estimator.   In addition, recall that the effective number of exceedance ratios involved in $\hat{\gamma}_{DH,k}$ is $km$. As $k$ increases by $1$,  the effective number of exceedance ratios will increase by $m$.  Thus, the performance of $\hat{\gamma}_{DH,k}$ is very sensitive to the choice of $k$.
If $m$ is large, with even a low level of $k$, the asymptotic bias may be  at a high level which may not be  acceptable  in  applications.

In existing extreme value statistics literature, there are two types of solutions for selecting the number of exceedance ratios in the estimation. 
The first stream of literature aims at finding the optimal level   that balances the asymptotic bias and variance, see e.g. \citet{danielsson2001using} and \citet{guillou2001diagnostic}.
The second stream of literature corrects the bias  and eventually  allows for choosing a high level of the number of exceedance ratios, see e.g. \citet{ivette2008tail} and \citet{de2016adapting}. In applications, if the sample size is large, the bias correction methods are preferred since they possess at least two advantages. First, bias correction methods allow for choosing a higher level of  the number of exceedance ratios than that used for the original  estimator, which results in also a lower level of variance. Second, bias correction methods lead to estimates that are less sensitive to the choice of the number of exceedance ratios.

In this paper, we shall adapt the distributed Hill estimator such that it is suitable for finite sample applications. More specifically, we introduce  a bias correction procedure for estimating the extreme value index,  without compromising the distributed inference setup.  Notice that  existing bias correction methods often rely on estimating a second order parameter and a second order scale function as given in \eqref{SOC} below. Such an estimation again requires the oracle sample which is infeasible in the context of distributed inference.  Therefore, we resort to a different approach, sticking to the requirement that only limited (fixed) number of results can be transmitted from each machine to the central machine. In such a way, the resulting estimator is not only asymptotically unbiased, but also in the same spirit of a DC algorithm. We name it as ``\dfno'' for the extreme value index.
The \dfno, on the one hand, is applicable to  distributed stored data, on the other hand, inherits the advantages of bias correction methods in extreme value statistics.  

 We remark that the requirement of transmitting limited (fixed) number of results from each machine to the central machine is in line with the privacy concern in practice. Consider a practical example where various insurance companies would not like to share their client level claim data, but would nevertheless be willing to collaborate with each other such that they can obtain a more accurate estimation for the tail risk of a certain type of insurance claims. They are willing to share some estimation results provided that other companies cannot infer client level data from the shared results. Given the sensitivity of the data, insurance companies would like to share as few results as possible. The less results transmitted and shared, the less likely that client level data can be recovered. In the proposed \dfno, we require that each machine transmit \textit{five} results to the central machine. We nevertheless consider other alternatives when further limitations on the number of results transmitted are imposed.  We compare their performance by an extensive simulation study.

The rest of the paper is organized as follows. Section 2 presents the idea for bias correction. Section 3 proposes a DC algorithm for estimating the second order parameter, defines the \df for the extreme value index and shows the main theoretical results. Section 4 provides a simulation study to confirm that the \df exhibits  superior performance compared to the  distributed Hill estimator. We discuss some extensions of our results in Section 5. The  proofs are given in the Appendix.

Throughout the paper, $a(t)\asymp b(t)$  means that both $|a(t)/b(t)|$ and $|b(t)/a(t)|$ are $O(1)$ as $t\to\infty$.

\section{Bias Correction Methodology}\label{Section:condition}
To  obtain the asymptotic normality of the distributed Hill estimator $\hat{\gamma}_{DH,k}$,   \citet{chen2021distributed} assumes the following
 second order condition. Suppose that there exist an eventually  positive  or negative function $A$ with $\lim_{t\to \infty}A(t)=0$ and a real number $\rho\le 0$ such that
$$
\lim_{t\to \infty}\frac{\frac{U(tx)}{U(t)}-x^{\gamma}}{A(t)}=x^{\gamma}\frac{x^{\rho}-1}{\rho},
$$
for all $x>0$, which is equivalent to 
\begin{equation}\label{SOC}
\lim_{t\to \infty}\frac{\log U(tx)-\log U(t)-\gamma \log x}{A(t)}=\frac{x^{\rho}-1}{\rho}.
\end{equation}
In addition,  assume that as $N \to \infty$,
\begin{equation}\label{Condtion of m,n}
	m=m(N) \to \infty,\quad  n=n(N)\to \infty,\quad  n/\log m \to \infty,
\end{equation}
and $k$ is either a fixed integer or an intermediate sequence, i.e.  $k=k(N)\to \infty, k/n \to 0$.  
Under conditions \eqref{SOC} and \eqref{Condtion of m,n}, \citet{chen2021distributed} shows that 
the distributed Hill estimator possesses the following asymptotic expansion:
$$
	\hat{\gamma}_{DH,k}-\gamma = \frac{\gamma P_{N}}{\sqrt{km}} + \frac{A(n/k)}{1-\rho}g(k,n,\rho)+\frac{1}{\sqrt{km}}o_P(1),
$$
where $P_{N} \sim N(0,1)$  and  
\begin{equation}\label{Def of g(k,n,rho)}
g(k,n,\rho) := \suit{\frac{k}{n}}^{\rho}\frac{\Gamma(n+1)\Gamma(k-\rho+1)}{\Gamma(n-\rho+1)\Gamma(k+1)},
\end{equation}
with $\Gamma$ denoting the gamma function.  By Lemma \ref{Lemma for Expectation} (see below), we have that, if $k$ is a fixed integer, then $g(k,n,\rho)\to k^{\rho}\Gamma(k-\rho+1)/\Gamma(k+1)$, as $N \to \infty$. If $k$ is an intermediate sequence, then $g(k,n,\rho)\to 1$, as $N\to \infty$.

 Since the bias term of the distributed Hill estimator is an explicit function $(1-\rho)^{-1}A(n/k)g(k,n,\rho)$, we shall estimate the bias, subtract it from the original distributed Hill estimator, which leads to the  \dfno.

The estimation of the bias term requires estimating the second order parameter $\rho$ and the second order scale function $A$ in condition \eqref{SOC}. For simplicity, we follow the bias correction literature to  assume that $\rho<0$, see e.g. \citet{de2016adapting} and \citet{ivette2007simple}.
  In order to obtain the asymptotic behavior of the estimator for $\rho$, a third order condition is often assumed.  We invoke the third order condition in \citet{alves2003new} as follows.
   Suppose that there exist an eventually positive or negative function $B$ with $\lim_{t\to \infty} B(t)=0$  and a real number $\tilde{\rho}\le 0$ such that 
  \begin{equation}\label{third order condition}
	  \begin{aligned}
 \lim_{t\to \infty} \frac{1}{B(t)}\set{\frac{\log U(tx)-\log U(t)-\gamma \log x}{A(t)}-\frac{x^{\rho}-1}{\rho}} 
   =\frac{1}{\tilde{\rho}}\suit{\frac{x^{\rho+\tilde{\rho}}-1}{\rho+\tilde{\rho}}-\frac{x^{\rho}-1}{\rho}}.
	  \end{aligned}
  \end{equation}

  Lastly, following  \citet{cai2012bias} and \citet{de2016adapting}, we use  a higher intermediate sequence $k_{\rho}$ for estimating the second order parameter $\rho$. Assume that as $N\to \infty$, $k_{\rho} = k_{\rho}(N)\to \infty, k_{\rho}/n \to 0$, and 
\begin{equation}\label{k_rho_condition}
	 \sqrt{k_{\rho}m}A(n/k_{\rho}) \to \infty, \sqrt{k_{\rho}m}A^2(n/k_{\rho})\to \lambda_1\in \mathbb{R}, \sqrt{k_{\rho}m}A(n/k_{\rho})B(n/k_{\rho})\to \lambda_2 \in \mathbb{R}.
\end{equation}
Similar to \citet{de2016adapting}, in the eventual  \df for the extreme value index, one can choose a higher number of exceedance ratios than that used in the distributed Hill estimator.
In our context, we choose a  sequence  $k_n$ such that, as $N\to \infty$, $ k_n/k_{\rho} \to 0$ and
\begin{equation}\label{k_gamma_condition}
	\sqrt{k_{n}m}A(n/k_{n}) \to \infty, \sqrt{k_{n}m}A^2(n/k_{n})\to 0, \sqrt{k_{n}m}A(n/k_{n})B(n/k_{n})\to 0.
\end{equation}
Here, similar to  the  distributed Hill estimator,  $k_n$ can be either a fixed integer or an intermediate sequence.

\section{ Main results}

We first introduce the estimator for the second order parameter $\rho$ in the distributed inference setup and study its asymptotic behavior. Then we define the \df for the extreme value index and show its asymptotic behavior. 
\subsection{Estimating the second order parameter}
If the oracle sample can be used, then there are several estimators for the second order parameter $\rho$, see e.g. \citet{alves2003new} and \citet{gomes2002semi}. However, since we cannot apply a statistical procedure to the oracle sample,
 we need to develop a DC algorithm for estimating $\rho$.
Consider the following statistics  computed based on observations on machine $j$,
$$
R_{j,k}^{(\alpha)}:=\frac{1}{k}\sum_{i=1}^{k} \set{\log M_j^{(i)}-\log M_j^{(k+1)}}^{\alpha}, \quad \alpha=1,2,3.
$$
 We request that each machine  sends the values $R_{j,k}^{(\alpha)},\alpha=1,2,3$ to the central machine. On  the central machine, we  take the  average of  the $R_{j,k}^{(\alpha)}$ statistics  to obtain 
$$
	R_k^{(\alpha)}=\frac{1}{m}\sum_{j=1}^m R_{j,k}^{(\alpha)},\quad \alpha=1,2,3.
$$
Motivated by  \citet{alves2003new}, we define the estimator for the second order parameter $\rho$ as
\begin{equation}\label{rho-estimator}
	\widehat{\rho}_{k,\tau}:=-3 \abs{\frac{T_{k,\tau}-1}{T_{k,\tau}-3}},
	\end{equation}
where 
$$
T_{k,\tau}:=
\frac{\suit{R_k^{(1)}}^{\tau}-\suit{R_k^{(2)}/2}^{\tau/2}}{\suit{R_k^{(2)}/2}^{\tau/2}-\suit{R_k^{(3)}/6}^{\tau/3}},
$$
and $\tau \ge 0$ is a tuning parameter. For $\tau=0$,  $T_{k,\tau}$ is defined by continuity. In practice, it is suggested   to choose  $\tau\in [0,1]$, see e.g. \citet{ivette2007simple} and \citet{ivette2008tail}. 

Before studying the asymptotics of $\hat{\rho}_{k,\tau}$, we first establish that  for $R^{(\alpha)}_k$ in the following proposition. Note that in this proposition, we use a general sequence $k$. Nevertheless, the proposition will be applied both for $k=k_n$ and $k=k_{\rho}$, see Section 3.2. 

\begin{prop}\label{theorem for expansion}
	Assume that the distribution function $F$ satisfies the third order condition 
\eqref{third order condition} with parameters $\gamma>0,\rho<0$ and $\tilde{\rho}\le 0$, and condition  \eqref{Condtion of m,n} holds. In addition,  suppose that an intermediate sequence $k$ satisfies that as $N \to \infty$, $ k/n \to 0$ and $\sqrt{km}A(n/k)B(n/k)=O(1),\sqrt{km}A^2(n/k)=O(1)$. Then  for suitable versions of the functions $A$ and $B$, denoted as $A_0$ and $B_0$ (see Lemma \ref{lemma for third order equality} below), we have that as $N \to \infty$,
\begin{itemize}
	\item[(i)]
	$$
	\begin{aligned}
		\sqrt{km}\suit{R_k^{(1)}-\gamma} -\gamma P_N^{(1)}-\frac{g(k,n,\rho)}{1-\rho}\sqrt{km}A_0(n/k)
			-\frac{g(k,n,\rho+\tilde{\rho})}{1-\rho-\tilde{\rho}}\sqrt{km}A_0(n/k)B_0(n/k)=o_p(1),
			\end{aligned}
	$$ 
	\item[(ii)]
	$$
		\begin{aligned}
		&\sqrt{km}\suit{R_k^{(2)}-2\gamma^2} -\gamma^2 P_N^{(2)}-2\gamma\sqrt{km}A_0(n/k)\frac{g(k,n,\rho)}{\rho}\set{\frac{1}{(1-\rho)^2}-1} \\
		&\quad \quad  -\sqrt{km}A_0^2(n/k)\frac{g(k,n,2\rho)}{\rho^2}\suit{\frac{1}{1-2\rho}-\frac{2}{1-\rho}+1}\\
		&\quad \quad -2\gamma \sqrt{km}A_0(n/k)B_0(n/k)\frac{g(k,n,\rho+\tilde{\rho})}{\rho+\tilde{\rho}}\set{\frac{1}{(1-\rho-\tilde{\rho})^2}-1}=o_p(1),
		\end{aligned}
	$$
	\item [(iii)]
	$$
		\begin{aligned}
	&\sqrt{km}\suit{R_k^{(3)}-6\gamma^3} -\gamma^3 P_N^{(3)}-6\gamma^2\sqrt{km}A_0(n/k)\frac{g(k,n,\rho)}{\rho}\set{\frac{1}{(1-\rho)^3}-1} \\
	&\quad \quad  -3\gamma\sqrt{km}A_0^2(n/k)\frac{g(k,n,2\rho)}{\rho^2}\set{\frac{1}{(1-2\rho)^2}-\frac{2}{(1-\rho)^2}+1}\\
	&\quad \quad -6\gamma^2 \sqrt{km}A_0(n/k)B_0(n/k)\frac{g(k,n,\rho+\tilde{\rho})}{\rho+\tilde{\rho}}\set{\frac{1}{(1-\rho-\tilde{\rho})^3}-1}=o_P(1),
	\end{aligned}
	$$
\end{itemize}
where $(P_N^{(1)}, P_N^{(2)}, P_N^{(3)})^T\sim N(\bf{0},\bSigma)$ with
$$
\bSigma = \suit{
	\begin{array}{lll}
		1 & 4 & 18 \\
		4 & 20 & 98 \\
		18 & 98 & 684 
	\end{array}
}.
$$

\end{prop}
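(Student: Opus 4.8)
The plan is to reduce all three expansions to a single‑machine computation, aggregate across the $m$ i.i.d.\ machines by a central limit theorem, and then argue that every term not displayed is $o_P(1)$ after multiplication by $\sqrt{km}$. First I would fix a machine $j$ and pass to the distributional representation $\log M_j^{(i)}-\log M_j^{(k+1)}\stackrel{d}{=}\log U(Y_{n-i+1,n})-\log U(Y_{n-k,n})$, where $Y_{1,n}\le\cdots\le Y_{n,n}$ are the order statistics of $n$ i.i.d.\ standard Pareto variables (so $\log Y$ is standard exponential). Conditionally on $Y_{n-k,n}=t$, the ratios $V_i:=Y_{n-i+1,n}/t$, $i=1,\dots,k$, are the order statistics of $k$ i.i.d.\ standard Paretos, so that $\{W_i:=\log V_i\}$ has the law of $k$ i.i.d.\ standard exponentials (the ordering is irrelevant once we average). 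Applying the third order expansion in the versions $A_0,B_0$ of Lemma \ref{lemma for third order equality} at $x=V_i$, $t=Y_{n-k,n}$ gives
$$
\log M_j^{(i)}-\log M_j^{(k+1)} = \gamma W_i + A_0(t)\,\frac{V_i^{\rho}-1}{\rho} + A_0(t)B_0(t)\,H_{\rho,\tilde\rho}(V_i) + (\text{remainder}),
$$
with $H_{\rho,\tilde\rho}(x)=\tilde\rho^{-1}\suit{(x^{\rho+\tilde\rho}-1)/(\rho+\tilde\rho)-(x^{\rho}-1)/\rho}$. Raising to the power $\alpha\in\{1,2,3\}$ and expanding binomially reduces the whole problem to averages of the form $\tfrac1k\sum_i W_i^{\beta}V_i^{c\rho}$ and $\tfrac1k\sum_i W_i^{\beta}H_{\rho,\tilde\rho}(V_i)$.

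Next I would extract the deterministic bias. Since $V^{c\rho}=e^{c\rho W}$ with $W$ standard exponential, $E\bra{W^{\beta}V^{c\rho}}=\Gamma(\beta+1)/(1-c\rho)^{\beta+1}$, and these moments reproduce exactly the rational coefficients in (i)--(iii); for instance the $A_0$‑coefficient in (ii) equals $2\gamma\,\rho^{-1}\set{(1-\rho)^{-2}-1}$, matching $2\gamma\,E\bra{W(V^{\rho}-1)/\rho}$, and the $A_0^2$‑coefficient matches $E\bra{((V^{\rho}-1)/\rho)^2}$. The gamma‑function factors $g(k,n,\cdot)$ appear only when the conditioning on $Y_{n-k,n}$ is removed: by the regular variation of $A_0$ one has $A_0(Y_{n-k,n})=A_0(n/k)\suit{Y_{n-k,n}k/n}^{\rho}(1+o(1))$, and since $Y_{n-k,n}^{-1}$ is a Beta variable, the moment identity of Lemma \ref{Lemma for Expectation} yields precisely $g(k,n,\rho)$, and similarly $g(k,n,2\rho)$ and $g(k,n,\rho+\tilde\rho)$ for the $A_0^2$ and $A_0B_0$ terms.

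I would then isolate the stochastic part. After centring, the leading random term on machine $j$ for the $\alpha$‑th statistic is $\gamma^{\alpha}\set{\tfrac1k\sum_{i=1}^k W_{ij}^{\alpha}-\alpha!}$, so the stochastic part of $\sqrt{km}\suit{R_k^{(\alpha)}-\gamma^{\alpha}\alpha!}$ is $\gamma^{\alpha}m^{-1/2}\sum_{j=1}^m k^{-1/2}\sum_{i=1}^k\suit{W_{ij}^{\alpha}-\alpha!}$. Because $(W,W^2,W^3)$ has finite moments of every order, a multivariate Lindeberg (or Lyapunov) central limit theorem for this triangular array delivers joint asymptotic normality, with $P_N^{(\alpha)}$ the limit of $m^{-1/2}\sum_j k^{-1/2}\sum_i(W_{ij}^{\alpha}-\alpha!)$. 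Its covariances are $\mathrm{Cov}(W^{\alpha},W^{\beta})=(\alpha+\beta)!-\alpha!\,\beta!$, which give the diagonal entries $1,20,684$ of $\bSigma$ and the off‑diagonal structure claimed in the proposition.

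The main obstacle will be the remainder control: showing that, after multiplication by $\sqrt{km}$, every discarded term is $o_P(1)$. Here I would invoke the uniform third‑order inequality attached to $A_0,B_0$, bounding the expansion error by $\varepsilon\,V_i^{\rho+\tilde\rho+\delta}$, and combine it with the scaling budget $\sqrt{km}A^2(n/k)=O(1)$, $\sqrt{km}A(n/k)B(n/k)=O(1)$ and $k/n\to0$ to absorb the $A_0^3$, $B_0^2$ and mixed remainders. The genuinely delicate point is that $m\to\infty$ simultaneously with $n$: one must guarantee that the largest of the $m$ tail quantiles lies in the range where the uniform inequality is valid and that the integrals $E\bra{V^{\rho+\tilde\rho+\delta}}$ against the exponential weighting remain finite and uniformly small across all machines. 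This is exactly where the hypothesis $n/\log m\to\infty$ enters, through an exponential tail bound combined with a Bonferroni step over the $m$ machines, ensuring that the maximal per‑machine remainder is negligible and that the aggregated remainder is $o_P(1)$.
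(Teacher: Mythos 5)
Your plan reproduces the paper's architecture almost exactly: the per-machine Pareto/R\'enyi representation, the uniform third-order inequality for the versions $A_0,B_0$ of Lemma \ref{lemma for third order equality}, a Bonferroni restriction to the event $\{Y_j^{(k+1)}\ge t_0 \text{ for all } 1\le j\le m\}$ (which is precisely where $n/\log m\to\infty$ enters; the paper cites Lemma S.2 of \citet{chen2021distributed} for this), exact exponential/Pareto moment identities producing the rational coefficients, the gamma-function factors $g(k,n,\cdot)$ via the moment of $\suit{kY_j^{(k+1)}/n}^{\rho}$ (Lemma \ref{Lemma for Expectation}), and a Lyapunov CLT for the triangular array $\{W_{ij}^{\alpha}\}$ with covariances $(\alpha+\beta)!-\alpha!\,\beta!$, which correctly yields $\bSigma$. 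All of that is sound and matches the paper.

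There is, however, one step that fails as written: the replacement $A_0(Y_j^{(k+1)})=A_0(n/k)\suit{kY_j^{(k+1)}/n}^{\rho}(1+o(1))$, which your anchoring of the expansion at the random threshold $t=Y_j^{(k+1)}$ forces on you. The hypotheses only give $\sqrt{km}A^2(n/k)=O(1)$ and $\sqrt{km}A(n/k)B(n/k)=O(1)$; the quantity $\sqrt{km}A(n/k)$ itself may diverge, and indeed must diverge in the intended application with $k=k_{\rho}$ by \eqref{k_rho_condition}. Since the proposition retains the term $\frac{g(k,n,\rho)}{1-\rho}\sqrt{km}A_0(n/k)$ to additive precision $o_P(1)$, a merely multiplicative $(1+o(1))$ error contributes $o_P\suit{\sqrt{km}A_0(n/k)}$, which is not $o_P(1)$; you would need the relative error to be $o\suit{1/\set{\sqrt{km}A_0(n/k)}}$, e.g.\ $O(A_0)+O(B_0)$, and this requires a quantified second-order regular variation statement for $A_0$ (implied by \eqref{third order condition}, but not by bare regular variation, and not supplied in your sketch). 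The paper sidesteps this entirely: it applies Lemma \ref{lemma for third order equality} at the deterministic point $t=n/k$ with $tx=Y_j^{(i)}$ and $tx=Y_j^{(k+1)}$ and differences, so that the random factor $\suit{kY_j^{(k+1)}/n}^{\rho}$ multiplies $A_0(n/k)$ \emph{exactly}, and all approximation error is confined to the $A_0B_0$ level, where the budget $\sqrt{km}A_0(n/k)B_0(n/k)=O(1)$ and the arbitrariness of $\delta$ close the argument. A secondary, more routine omission: you sweep the centered fluctuations of the random second-order coefficients (the averages $m^{-1}\sum_j \delta_{j,n}$ in the paper's notation) into ``every discarded term is $o_P(1)$,'' but establishing that these are $O_P(1/\sqrt{km})$ for both fixed and intermediate $k$ is where the paper spends most of its effort (the variance and fourth-moment computations of Lemmas \ref{Lemma for Expectation} and \ref{Lemma for Lyapunov}, including the cancellation of the $k^{-1}$ and $n^{-1}$ terms in the fourth central moment); a Chebyshev bound with the variance computation would suffice, but some such computation, tracking the joint $(k,n)$ dependence through $g(k,n,\rho)$ and $g(k,n,2\rho)$, must actually be carried out.
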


Applying Proposition \ref{theorem for expansion} leads to 
 the asymptotic behavior of $\hat{\rho}_{k,\tau}$ as follows.

\begin{theorem}\label{Theorem of rho}
	Assume that the distribution function $F$ satisfies the third order condition 
\eqref{third order condition} with parameters $\gamma>0,\rho<0$ and $\tilde{\rho}\le 0$, and condition   \eqref{Condtion of m,n} holds.
Suppose that the intermediate sequence $k_{\rho}$ satisfies  condition \eqref{k_rho_condition}.  Then as  $N\to \infty$, for each $\tau\ge 0$,
	$$
	\sqrt{k_{\rho}m}A_0(n/k_{\rho})(\hat{\rho}_{k_{\rho},\tau}-\rho) =O_P(1),
	$$
	where $\hat{\rho}_{k_{\rho},\tau}$ is defined in \eqref{rho-estimator}.
\end{theorem}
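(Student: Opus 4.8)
The plan is to reduce everything to the three central statistics $R_{k_\rho}^{(\alpha)}$, $\alpha=1,2,3$, whose behaviour is handed to us by Proposition \ref{theorem for expansion}, and then to propagate the resulting expansions through the smooth maps that define $T_{k_\rho,\tau}$ and $\hat\rho_{k_\rho,\tau}$. Writing $A:=A_0(n/k_\rho)$, $B:=B_0(n/k_\rho)$ and introducing the natural normalization $b_\alpha:=R_{k_\rho}^{(\alpha)}/(\alpha!\,\gamma^\alpha)-1$, Proposition \ref{theorem for expansion} gives, for each $\alpha$, an expansion of the form $b_\alpha = c_\alpha\, g(k_\rho,n,\rho)\,A + O_P(1/\sqrt{k_\rho m}) + O(A^2)+O(AB)$, where $c_\alpha$ is an explicit constant depending only on $\rho$ (obtained by dividing the displayed bias coefficients by $\alpha!\,\gamma^\alpha$). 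Since $(R_{k_\rho}^{(1)})^\tau=\gamma^\tau(1+b_1)^\tau$, $(R_{k_\rho}^{(2)}/2)^{\tau/2}=\gamma^\tau(1+b_2)^{\tau/2}$ and $(R_{k_\rho}^{(3)}/6)^{\tau/3}=\gamma^\tau(1+b_3)^{\tau/3}$, the common factor $\gamma^\tau$ cancels in both the numerator and the denominator of $T_{k_\rho,\tau}$, so that $T_{k_\rho,\tau}$ equals the ratio of $(1+b_1)^\tau-(1+b_2)^{\tau/2}$ to $(1+b_2)^{\tau/2}-(1+b_3)^{\tau/3}$, with the obvious logarithmic analogue when $\tau=0$.

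Next I would Taylor-expand the powers about $1$. Because every $b_\alpha\to 0$, a first-order expansion gives numerator $=\tau b_1-(\tau/2)b_2+O(b^2)$ and denominator $=(\tau/2)b_2-(\tau/3)b_3+O(b^2)$. Substituting the expansion of $b_\alpha$, the leading contribution of each is deterministic and proportional to $g(k_\rho,n,\rho)\,A$. The key structural point is that the \emph{same} factor $g(k_\rho,n,\rho)\,A$ multiplies the leading coefficient of both numerator and denominator, so it cancels in the ratio and the limiting value $T_0$ is a pure function of $\rho$; it is arranged, exactly as in \citet{alves2003new}, so that $-3\,|(T_0-1)/(T_0-3)|=\rho$. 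Here I would record the two algebraic facts that the leading numerator and denominator coefficients are nonzero for $\rho<0$ and that $T_0\neq 1,3$, which makes the map $h(T):=-3\,|(T-1)/(T-3)|$ differentiable at $T_0$ with $h(T_0)=\rho$.

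The heart of the argument is then the bookkeeping of error orders. The subleading terms in each $b_\alpha$ are the stochastic fluctuation $O_P(1/\sqrt{k_\rho m})$ together with the deterministic remainders $O(A^2)$ and $O(AB)$; relative to the leading term of order $A$ these are $O_P\!\big(1/(\sqrt{k_\rho m}\,A)\big)$, $O(A)$ and $O(B)$, respectively. Condition \eqref{k_rho_condition} — specifically $\sqrt{k_\rho m}\,A^2\to\lambda_1\in\mathbb{R}$ and $\sqrt{k_\rho m}\,AB\to\lambda_2\in\mathbb{R}$ — forces both $A$ and $B$ to be $O\!\big(1/(\sqrt{k_\rho m}\,A)\big)$, so all three relative errors collapse to the single rate $1/(\sqrt{k_\rho m}\,A)$. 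Consequently both the numerator and the denominator of $T_{k_\rho,\tau}$ have the form (nonzero deterministic leading term)$\,\times\,\big(1+O_P(1/(\sqrt{k_\rho m}\,A))\big)$, whence $T_{k_\rho,\tau}-T_0=O_P\!\big(1/(\sqrt{k_\rho m}\,A)\big)$. A one-term Taylor expansion of $h$ at $T_0$ then gives $\hat\rho_{k_\rho,\tau}-\rho=h'(T_0)(T_{k_\rho,\tau}-T_0)+o_P(T_{k_\rho,\tau}-T_0)=O_P\!\big(1/(\sqrt{k_\rho m}\,A)\big)$, which rearranges to the claim.

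The main obstacle is precisely this last piece of error bookkeeping. Since both numerator and denominator of $T_{k_\rho,\tau}$ vanish in the limit, continuity alone is useless and the entire rate is generated by the first-order corrections; one must verify that the three parts of \eqref{k_rho_condition} conspire to put every subleading term — stochastic and deterministic alike — at the \emph{same} order $1/(\sqrt{k_\rho m}\,A)$, and that no cancellation destroys the nonvanishing of the leading denominator coefficient. The $\tau=0$ case requires repeating the expansion with $\log(1+b_\alpha)=b_\alpha-b_\alpha^2/2+\cdots$ in place of the binomial expansion, but is otherwise identical.
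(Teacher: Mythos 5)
Your proposal is correct and takes essentially the same route as the paper's own proof: you feed the Proposition \ref{theorem for expansion} expansions of $R_{k_\rho}^{(\alpha)}$ through Taylor expansions of the powers in $T_{k_\rho,\tau}$, note that the common leading term proportional to $g(k_\rho,n,\rho)A_0(n/k_\rho)$ cancels in the ratio so that $T_{k_\rho,\tau}$ converges to $T_0=3(\rho-1)/(\rho-3)$ at rate $O_P\bigl(1/(\sqrt{k_\rho m}\,A_0(n/k_\rho))\bigr)$, and conclude with a one-term expansion of $h(T)=-3\left|(T-1)/(T-3)\right|$ at $T_0$, which is exactly the paper's final step via Cram\'er's delta method. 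Your bookkeeping under condition \eqref{k_rho_condition} (all subleading terms, stochastic and deterministic, collapsing to the single relative order $1/(\sqrt{k_\rho m}\,A_0(n/k_\rho))$) matches the paper's identification of the dominating terms, including the checks that the leading numerator and denominator coefficients are nonzero for $\rho<0$ and that $T_0\in(1,3)$ makes $h$ smooth there.
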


\subsection{\ldf for the extreme value index}
 Motived by \citet{de2016adapting},  we define the following estimator as the \df for  the extreme value index:
\begin{equation}\label{gamma_estimator}
	\tilde{\gamma}_{k_n,k_{\rho},\tau}:=R_{k_{n}}^{(1)}-\frac{R_{k_{n}}^{(2)}-2\suit{R_{k_{n}}^{(1)}}^2}{2R_{k_{n}}^{(1)}\hat{\rho}_{k_{\rho},\tau} (1-\hat{\rho}_{k_{\rho},\tau})^{-1}},
\end{equation}
where $\tau\ge 0$ is a tuning parameter. Notice that the estimator $\tilde{\gamma}_{k_n,k_{\rho},\tau}$ in \eqref{gamma_estimator} adheres to a DC algorithm since each machine only  sends  five values  $\set{R_{j,k_n}^{(1)}, R_{j,k_n}^{(2)}, R_{j,k_{\rho}}^{(1)}, R_{j,k_{\rho}}^{(2)}, R_{j,k_{\rho}}^{(3)}}$ to the central machine.
\begin{remark}
The statistic $R_{k_n}^{(1)}$ is the original distributed Hill estimator $\hat{\gamma}_{DH,k_n}$.
\end{remark}
The following theorem  shows  the  asymptotic  normality  of the \dfno.
\begin{theorem}\label{Theorem : gamma}
	Assume that the distribution function $F$ satisfies the third order condition 
\eqref{third order condition} with parameters $\gamma>0,\rho<0$ and $\tilde{\rho}\le 0$, and  condition \eqref{Condtion of m,n} holds.
	Suppose that $k_{\rho}, k_n$ satisfy  conditions \eqref{k_rho_condition} and \eqref{k_gamma_condition} respectively. Then as $N \to \infty$, for each $\tau\ge 0$,
	$$
\sqrt{k_nm}\suit{\tilde{\gamma}_{k_n,k_{\rho},\tau}-\gamma}\stackrel{d}{\to} N\bra{0,\gamma^2 \set{1+\suit{\rho^{-1}-1}^2}}.
	$$
\end{theorem}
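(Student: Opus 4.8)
The plan is to write the estimator as $\tilde\gamma_{k_n,k_\rho,\tau}-\gamma = \suit{R_{k_n}^{(1)}-\gamma} - C_N$, where $C_N$ denotes the correction term $\set{R_{k_n}^{(2)}-2(R_{k_n}^{(1)})^2}/\set{2R_{k_n}^{(1)}\hat\rho_{k_\rho,\tau}(1-\hat\rho_{k_\rho,\tau})^{-1}}$, and to expand each piece using Proposition \ref{theorem for expansion} (applied with $k=k_n$), which supplies the three expansions \emph{jointly} with a common Gaussian vector $(P_N^{(1)},P_N^{(2)},P_N^{(3)})$, together with Theorem \ref{Theorem of rho}. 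The key structural fact I expect is that the leading bias of $C_N$ reproduces exactly the leading bias $g(k_n,n,\rho)A_0(n/k_n)/(1-\rho)$ of $R_{k_n}^{(1)}-\gamma$, so that after subtraction only a stochastic term of order $(k_nm)^{-1/2}$ survives; identifying this term and computing its variance yields the stated limit.

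First I would analyze the numerator $R_{k_n}^{(2)}-2(R_{k_n}^{(1)})^2$. Since $R_{k_n}^{(2)}\to 2\gamma^2$ and $R_{k_n}^{(1)}\to\gamma$, this quantity tends to $0$, and Proposition \ref{theorem for expansion}(i)--(ii) pin down its rate. Writing $R_{k_n}^{(1)}=\gamma+a_1$ and $R_{k_n}^{(2)}=2\gamma^2+a_2$, one has $R_{k_n}^{(2)}-2(R_{k_n}^{(1)})^2=a_2-4\gamma a_1-2a_1^2$. Collecting the $A_0(n/k_n)$ terms from the two expansions and using the identity $\rho^{-1}\set{(1-\rho)^{-2}-1}-2(1-\rho)^{-1}=\rho(1-\rho)^{-2}$ shows the leading deterministic part of the numerator equals $2\gamma A_0(n/k_n)g(k_n,n,\rho)\rho(1-\rho)^{-2}$, while its leading stochastic part is $\gamma^2(P_N^{(2)}-4P_N^{(1)})(k_nm)^{-1/2}$; the terms carrying $g(k_n,n,2\rho)$ and $g(k_n,n,\rho+\tilde\rho)$, as well as $a_1^2$, are $o_P((k_nm)^{-1/2})$ under \eqref{k_gamma_condition}. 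For the denominator, Theorem \ref{Theorem of rho} gives $\hat\rho_{k_\rho,\tau}\stackrel{P}{\to}\rho$ (as $\sqrt{k_\rho m}A_0(n/k_\rho)\to\infty$), so $2R_{k_n}^{(1)}\hat\rho_{k_\rho,\tau}(1-\hat\rho_{k_\rho,\tau})^{-1}\to 2\gamma\rho(1-\rho)^{-1}$.

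Dividing, the leading bias of $C_N$ becomes $g(k_n,n,\rho)A_0(n/k_n)/(1-\rho)$, which cancels the bias of $R_{k_n}^{(1)}-\gamma$, while the leading stochastic part of $C_N$ is $\frac{1-\rho}{2\rho}\gamma(P_N^{(2)}-4P_N^{(1)})(k_nm)^{-1/2}$. Consequently $\sqrt{k_nm}\suit{\tilde\gamma_{k_n,k_\rho,\tau}-\gamma}$ converges to $\gamma\bra{\frac{2-\rho}{\rho}P_N^{(1)}-\frac{1-\rho}{2\rho}P_N^{(2)}}$, a linear combination of the bivariate Gaussian $(P_N^{(1)},P_N^{(2)})$ whose covariance is the leading $2\times2$ block of $\bSigma$, namely $\mathrm{Var}(P_N^{(1)})=1$, $\mathrm{Var}(P_N^{(2)})=20$ and $\mathrm{Cov}(P_N^{(1)},P_N^{(2)})=4$. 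A direct variance computation then gives $\gamma^2\rho^{-2}(1-2\rho+2\rho^2)=\gamma^2\set{1+(\rho^{-1}-1)^2}$, as claimed.

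The main obstacle is the error bookkeeping that justifies the cancellation, in particular controlling the mismatch created by replacing $\rho$ with $\hat\rho_{k_\rho,\tau}$ and $\gamma$ with $R_{k_n}^{(1)}$ in the denominator. A Taylor expansion of $C_N$ around $(\gamma,\rho)$ shows this mismatch is of order $A_0(n/k_n)\abs{\hat\rho_{k_\rho,\tau}-\rho}$ and $A_0^2(n/k_n)$. By Theorem \ref{Theorem of rho}, $\abs{\hat\rho_{k_\rho,\tau}-\rho}=O_P\suit{(\sqrt{k_\rho m}A_0(n/k_\rho))^{-1}}$, so after multiplication by $\sqrt{k_nm}$ the first term is $O_P\suit{\sqrt{k_n/k_\rho}\,A_0(n/k_n)/A_0(n/k_\rho)}$, which vanishes because $k_n/k_\rho\to 0$ and, by the regular variation of $A_0$ with index $\rho<0$, $A_0(n/k_n)/A_0(n/k_\rho)\to 0$; the second term vanishes since $\sqrt{k_nm}A_0^2(n/k_n)\to 0$. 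Handling these remainders, and verifying that the cross terms between the stochastic part of the numerator and the fluctuations of the denominator are $o_P(1)$ after scaling, is the crux of the argument.
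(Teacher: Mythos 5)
Your proposal follows essentially the same route as the paper's own proof: you expand $R_{k_n}^{(1)}$ and $R_{k_n}^{(2)}-2\suit{R_{k_n}^{(1)}}^2$ jointly via Proposition \ref{theorem for expansion} with $k=k_n$, observe that the correction term's bias $g(k_n,n,\rho)A_0(n/k_n)/(1-\rho)$ exactly cancels that of $R_{k_n}^{(1)}-\gamma$ (via the same identity yielding the coefficient $2\gamma\rho(1-\rho)^{-2}$), control the plug-in error from $\hat{\rho}_{k_{\rho},\tau}$ using Theorem \ref{Theorem of rho} together with $A_0(n/k_n)/A_0(n/k_{\rho})\to 0$, and compute the variance of $\gamma\set{(2-\rho)\rho^{-1}P_N^{(1)}-(1-\rho)(2\rho)^{-1}P_N^{(2)}}$ from the upper-left block of $\bSigma$, obtaining $\gamma^2\rho^{-2}(1-2\rho+2\rho^2)=\gamma^2\set{1+\suit{\rho^{-1}-1}^2}$ as in the paper. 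The argument and all intermediate quantities match the paper's proof, so the proposal is correct.
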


\begin{remark}
We investigate the conditions in Theorem \ref{Theorem : gamma} to determine the range of $m$ (and $k$) such that the oracle property holds.
The last statement in Condition \eqref{Condtion of m,n}, $n/\log m \to \infty$ as $N\to\infty$, provides an upper bound for $m$ as $m=o(N/\log N)$ as $N\to\infty$. Condition \eqref{k_gamma_condition} leads to an upper bound for $k_nm$: based on the second order condition \eqref{SOC}, we need to have $k_nm = O(N^{\xi})$ with $\xi <1$.  Clearly, for the number of machine $m$, the second upper bound is a stricter requirement than the first.
\end{remark}

\begin{remark}\label{remark: DC property}
	The limit distribution in Theorem \ref{Theorem : gamma} is the same as that of the  bias corrected Hill estimator based on the oracle sample, see for example \citet{de2016adapting}. In other words, the \df  achieves the oracle property regardless whether $k_n$ is a fixed integer or an intermediate sequence. 
	\citet{chen2021distributed} shows that  when $k_n$ is a fixed integer, the distributed Hill estimator may possess a higher  bias than that of the oracle Hill estimator. Consequently, the distributed Hill estimator achieves the oracle property only if additional conditions are assumed, see Corollary 1 therein. If the additional conditions fail, the violation of the oracle property is due to the difference in the asymptotic biases of the two estimators. By contrast, the \df achieves the oracle property without any additional assumption when $k_n$ is a fixed integer. This is due to the fact that the asymptotic bias was corrected.

	Nevertheless, if Condition \eqref{k_gamma_condition} is violated in the following sense:
as $N\to\infty$,  $\sqrt{k_nm}A^2(n/k_n) \to \lambda_3$ and $\sqrt{k_nm}A(n/k_n)B(n/k_n) \to\lambda_4 $ where $\lambda_3\neq 0$ or $\lambda_4\neq 0$, then the oracle bias corrected estimator will possess a non-zero asymptotic bias. In this case, the \df may not possess the oracle property.
\end{remark}

\begin{remark} We investigate the optimal choice for $k_n$ in terms of the level of the asymptotic root mean squared error (RMSE). We first consider the \dfno. To simplify the discussion, we focus on the case $A(t) \asymp t^{\rho}, B(t)\asymp t^{\tilde{\rho}}$  as $t\to\infty$. The best attainable
	rate of convergence is achieved when squared bias and variance are of the same order, that is, 	when 
	$$
	   \frac{1}{\sqrt{k_nm}} \asymp A(n/k_n)\set{A(n/k_n)+B(n/k_n)},
	$$
	as $N\to\infty$.
	Solving $k_n$ yields that $k_n^{DC} \asymp N^{-2\rho^{*}/(1-2\rho^{*})} m^{-1}$ as $N\to\infty$,  where $\rho^*=\rho+\max(\rho,\tilde{\rho})$.  

	Similarly, we obtain the optimal choice of $k_n$ in a single machine as  $k_n^{Single}\asymp  n^{-2\rho^{*}/(1-2\rho^{*})} $. Note that, as $N\to\infty$, $k_n^{DC}/k_n^{Single} \asymp m^{-1/(1-2\rho^*)} \to 0$. We conclude that the two optimal choices do not match each other: the optimal choice of $k_n$ at each individual machine is too high for optimally using the \dfno. In practice, for example, in the insurance claim example, to make use of the \dfno, one needs to coordinate the choice of $k_n$ at all insurance companies instead of allowing each insurance company to choose the optimal level of $k_n$ based on their own data. 
\end{remark}

\section{Simulation Study}

\subsection{Comparison with the original distributed Hill estimator}
In this subsection, we conduct a simulation study to demonstrate the finite sample performance of the  \df for the extreme value index. Data are simulated from three distributions: the  Fr\'echet distribution, $F(x)= \exp\suit{-x^{-1}}, x>0$; the Burr distribution, $F(x)=1-(1+x^{1/2})^{-2}, x>0$; and  the absolute Cauchy distribution with the density function $f(x)=2/\set{\pi(1+x^2)}, x>0$. The first, second  and third order indices of the three distributions are listed in Table \ref{Distributions for simulation}.  We generate $r=1000$ samples with sample size $N=10000$.   The value of  $k_{\rho}$ is chosen to be $[n^{0.98}]$ as suggested by \citet{cai2012bias}, where $[x]$ denotes the largest integer less than or equal to $x$.

\begin{table}[htbp]
	\centering
	\begin{tabular}{cccc}
		\hline
			 & Fr\'echet & Burr & Absolute Cauchy \\
		\hline
		$\gamma$ & $1$& $1$& $1$\\
		$\rho$ & $-1$& $-1/2$ & $-2$ \\
		$\tilde{\rho}$ &$-1$ &$-1/2$ &$-4$ \\
	\hline
	\end{tabular}
	\caption{The first, second and third order indicies for the distributions.}
	\label{Distributions for simulation}
\end{table}

To apply the \dfno, we use the following procedure:
\begin{itemize}
	\item[1.] On each machine $j$, we calculate $R_{j,k_n}^{(1)}$, $R_{j,k_n}^{(2)}$, $R_{j,k_{\rho}}^{(1)}$, $R_{j,k_{\rho}}^{(2)}$, $R_{j,k_{\rho}}^{(3)}$ and transmit them to the central machine. 
	\item[2.] On the central machine, we take the average of the $R_{j,k_n}^{(1)}$, $R_{j,k_n}^{(2)}$, $R_{j,k_{\rho}}^{(1)}$, $R_{j,k_{\rho}}^{(2)}$, $R_{j,k_{\rho}}^{(3)}$ statistics collected from the $m$ machines  to obtain $R_{k_n}^{(1)}$, $R_{k_n}^{(2)}$, $R_{k_{\rho}}^{(1)}$, $R_{k_{\rho}}^{(2)}$, $R_{k_{\rho}}^{(3)}$.
	\item[3.] On the central machine, we estimate the second order parameter $\rho$ by  \eqref{rho-estimator} with $k=k_{\rho}$.  The value of the tuning parameter $\tau$ is set at $0,0.5$ and $1$.
	\item[4.] On the central machine, we estimate the extreme value index by \eqref{gamma_estimator} for various values of $k_n$, using $\hat{\rho}_{k_{\rho},\tau}$.
\end{itemize}

We assume that the  $N=10000$ observations are stored in $m=1,20,100$ machines with $n=N/m$ observations each.  Note that the case $m=1$ corresponds to applying the  statistical procedure to the oracle sample directly. The corresponding estimator is therefore the oracle estimator.

Figure \ref{figure: Absolute Bias.} shows the absolute bias against various levels of $k_n$ for the three distributions with $m=20$. The results for other values of $m$ show similar patterns and are thus omitted.
 We observe that,   the \df $\tilde{\gamma}_{k_n,k_{\rho},\tau}$ generally has superior performance compared to the original distributed Hill estimator $\hat{\gamma}_{DH,k_n}$. As $k_n$ increases, the bias of the distributed Hill estimator increases, while the \df has almost zero bias except for very high level of $k_n$. This is in line with the asymptotic theory.
In addition, the choice of $\tau$ affects the performance of the \dfno.  When $\rho<-1$ (absolute Cauchy distribution), $\tau=1$ is a better choice than $\tau=0$. When $\rho\ge -1$ (Fr\'echet distribution and Burr distribution), $\tau=0$ is a better choice than $\tau=1$. This  is in line with the findings in \citet{alves2003new}.
 
 \begin{figure}
	\centering
\subfigcapskip=-8pt
\subfigure[Fr\'echet]{
\includegraphics[width=0.32\textwidth]{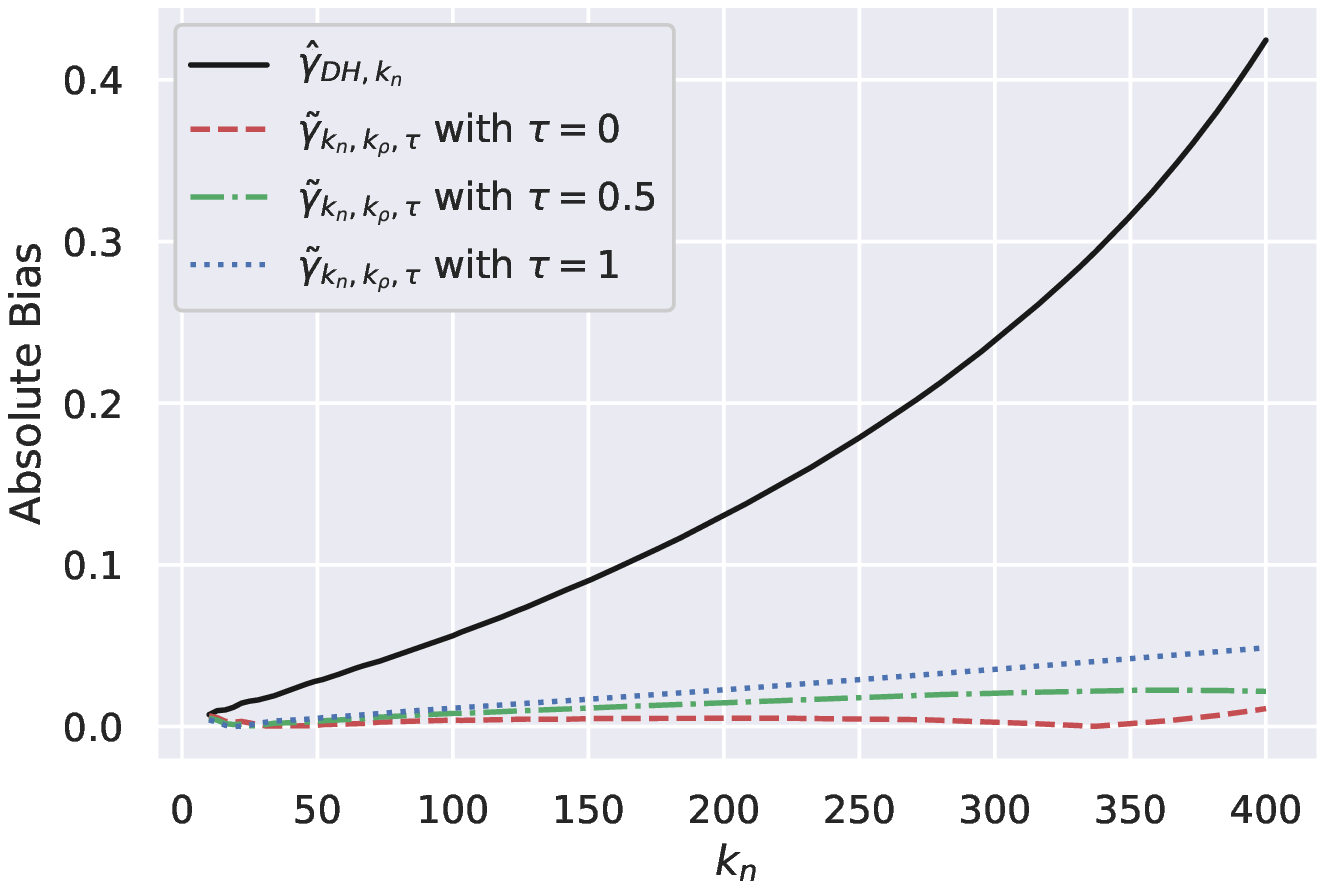}
}
\hspace{-3ex}
\subfigure[Burr]{
	\includegraphics[width=0.32\textwidth]{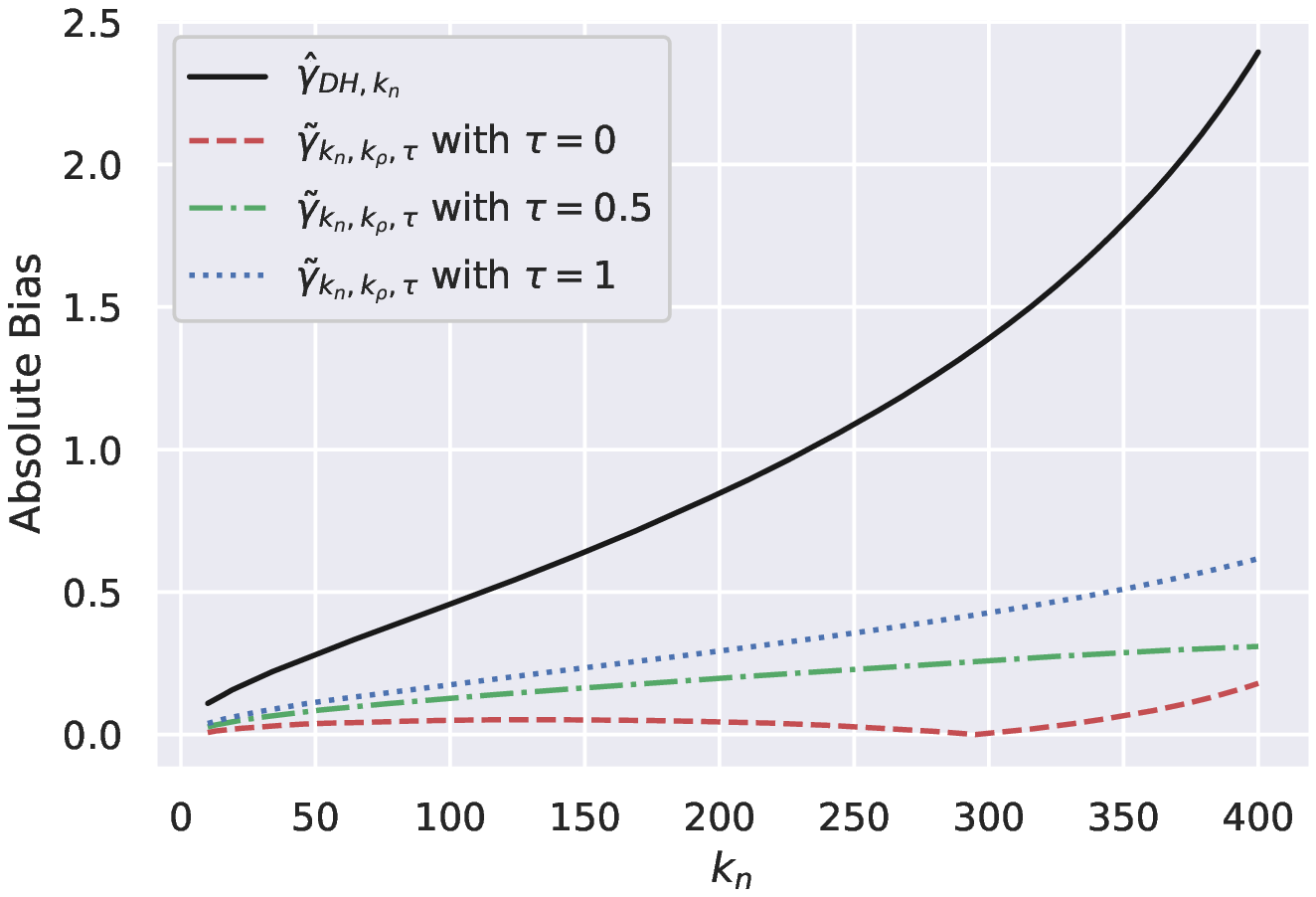}
}
\hspace{-3ex}
\subfigure[Absolute Cauchy]{
	\includegraphics[width=0.32\textwidth]{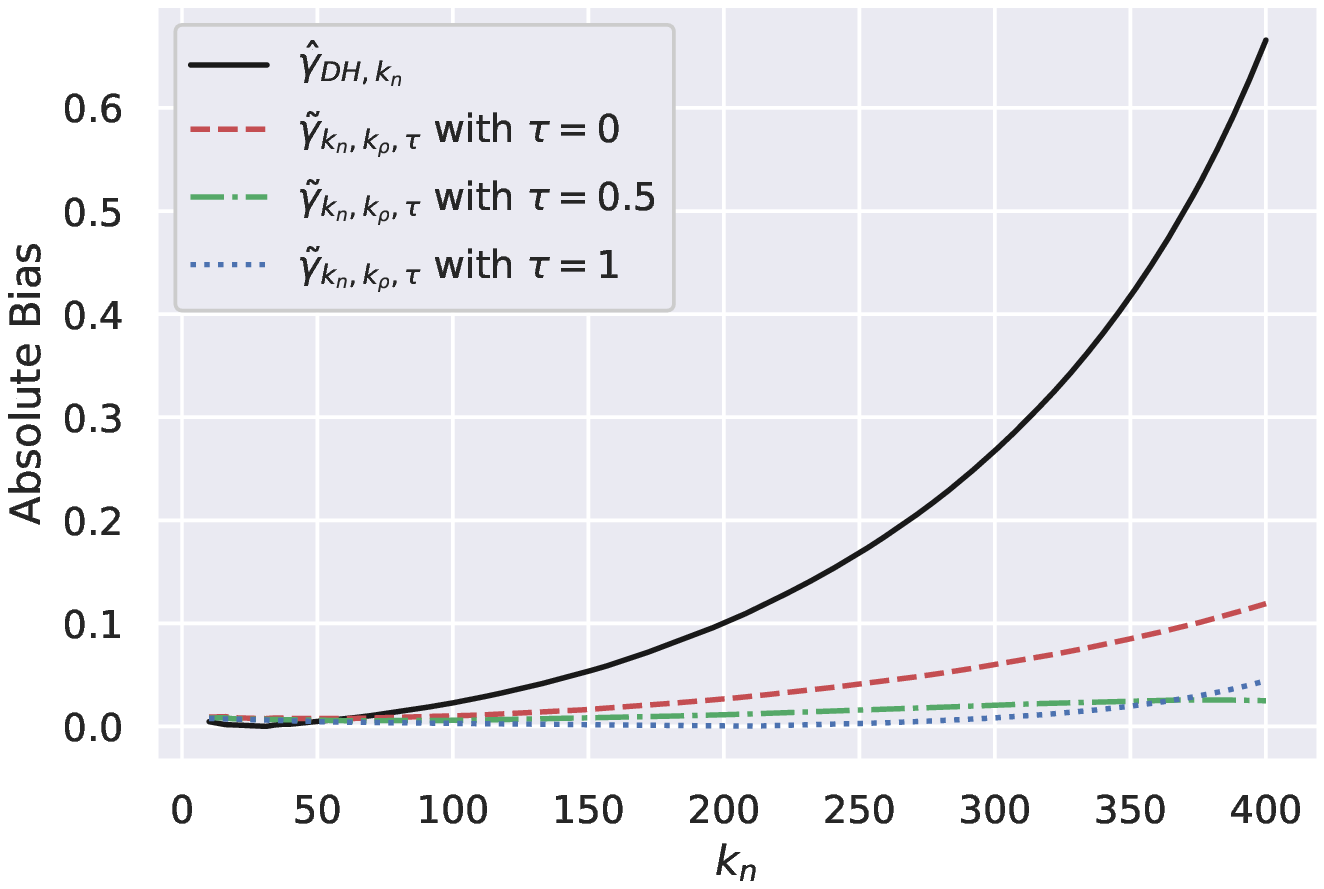}
}
\caption{Absolute bias    for different levels of $k_n$ with $m=20$.}
\label{figure: Absolute Bias.}
\end{figure}





Next, we compare the performance of the \df for different values of $m$. In this comparison, we fix 
$\tau = 0.5$. We plot the RMSE of the estimators against various levels of $k_n m$   in Figure \ref{figure: MSE with m}. For the  Fr\'echet distribution and the  absolute Cauchy distribution, the performance of the \df is generally not sensitive to the variation in  $m$. The performance across different values of $m$ is comparable to the case $m=1$, i.e., the oracle property holds.  For the Burr distribution, the oracle property only holds when $k_nm$ is low. When $k_nm$ is high, the oracle bias corrected estimator fails to correct the bias and the RMSE for the distributed estimator is higher than that of the oracle estimator. This observation is in line with the theoretical discussion in Remark \ref{remark: DC property}. 

\begin{figure}
	\centering
\subfigcapskip=-8pt
\subfigure[Fr\'echet]{
\includegraphics[width=0.32\textwidth]{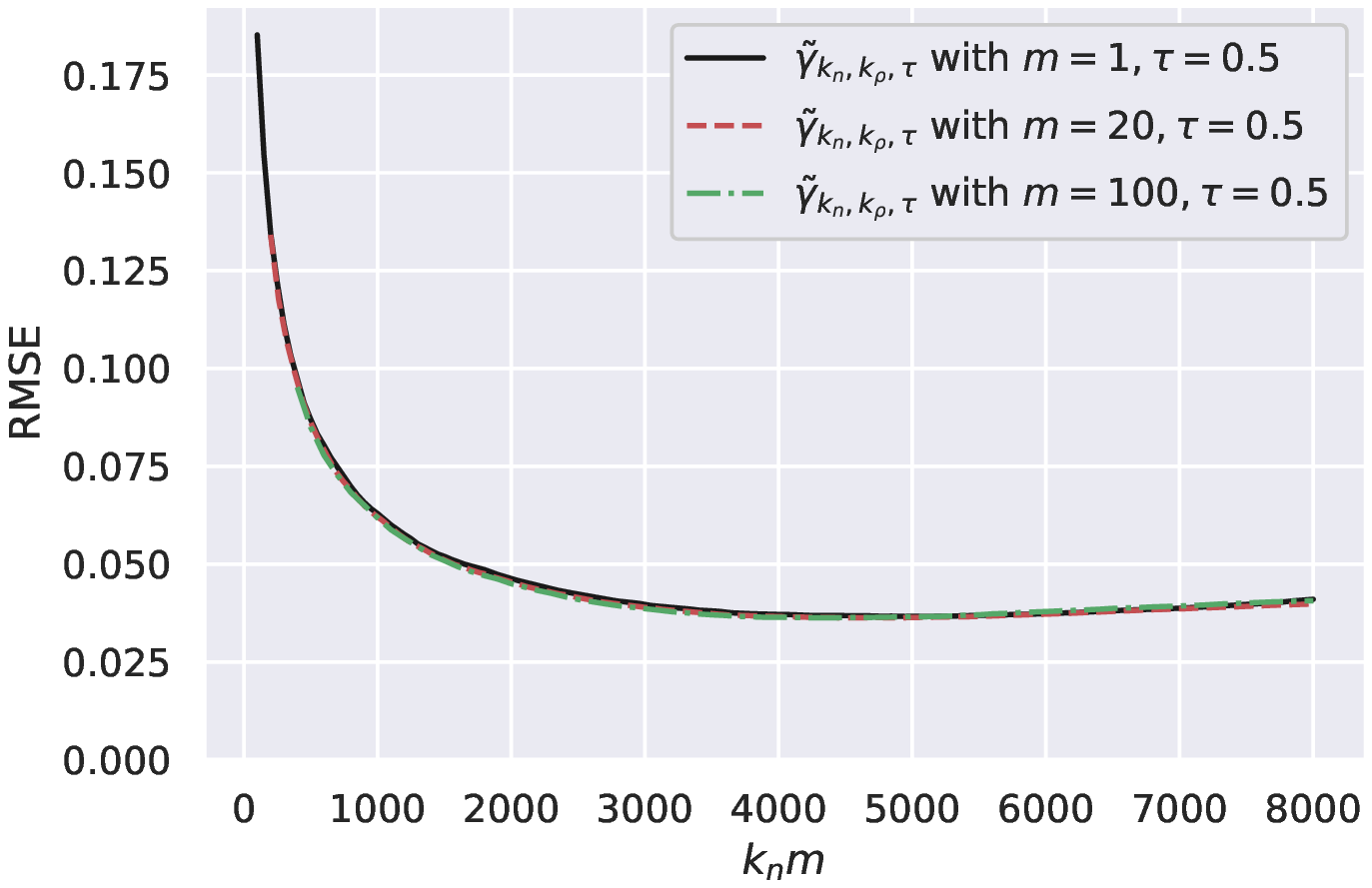}
}
\hspace{-3ex}
\subfigure[Burr]{
	\includegraphics[width=0.32\textwidth]{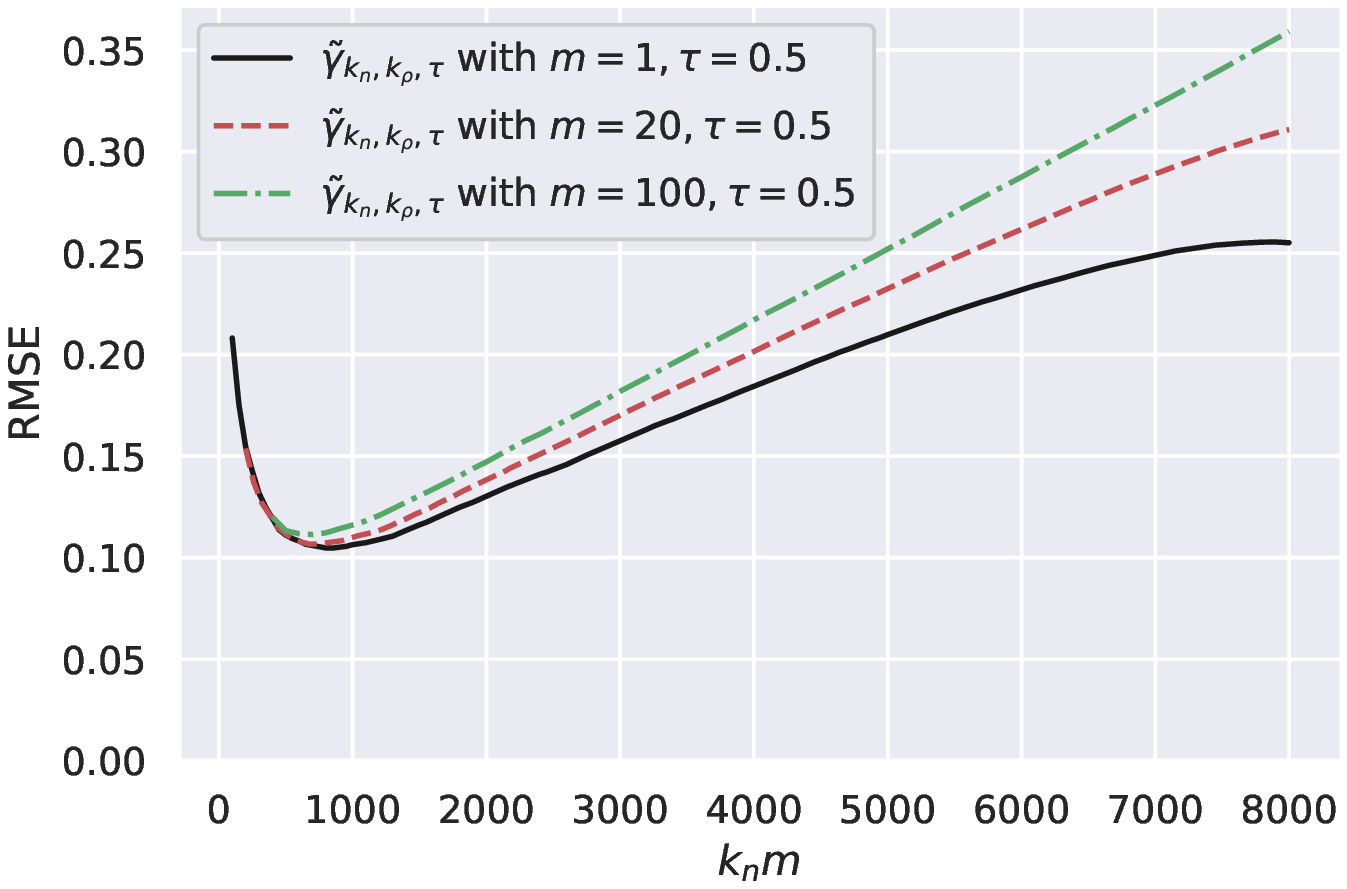}
}
\hspace{-3ex}
\subfigure[Absolute Cauchy]{
	\includegraphics[width=0.32\textwidth]{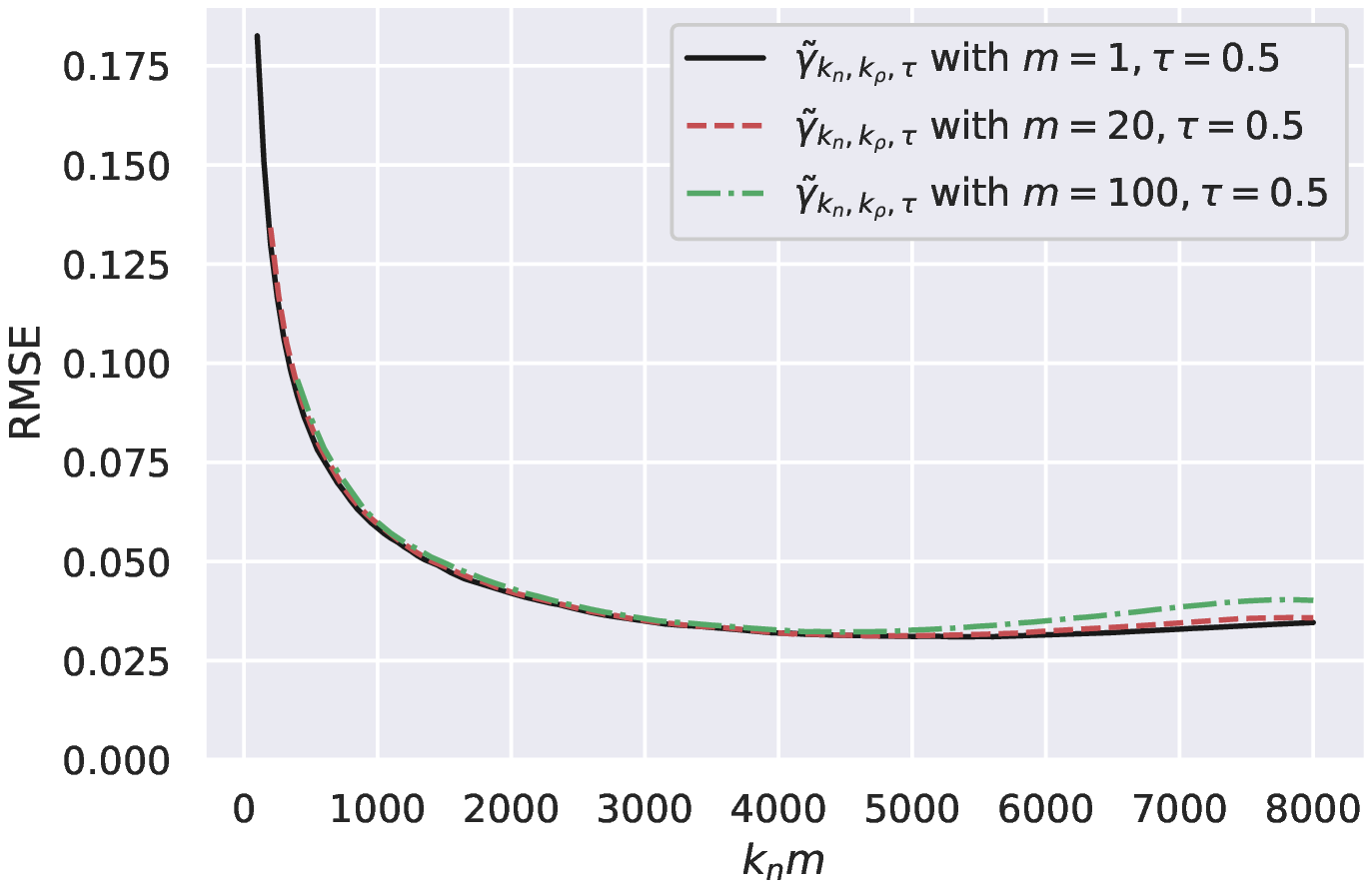}
}
\caption{RMSE for  different levels of $k_nm$.}
\label{figure: MSE with m}
\end{figure}


 One important advantage of bias correction method in extreme value statistics is that the bias corrected estimator is relatively insensitive to the number of tail observations used in estimation, when applying it to a single sample. This advantage might be less pronounced for the distributed estimator since increasing $k_n$ by 1 will effectively lead to an increase of the number of tail observations by $m$. To examine this effect, we compare the single sample performance of the \df with different values of $m$. Figure \ref{Single sample} shows the plot of the estimates against various levels of $k_nm$ based on one single sample consisting of $10000$ observations. We observe that the path of the \df across different values of $m$ is comparable to the case $m=1$. In other words, the \df inherits the advantage of the bias correction estimator: it stabilizes the performance over a broader range of $k_n m$.
\begin{figure}[htbp]
	\centering
\subfigcapskip=-8pt
\subfigure[Fr\'echet]{
\includegraphics[width=0.32\textwidth]{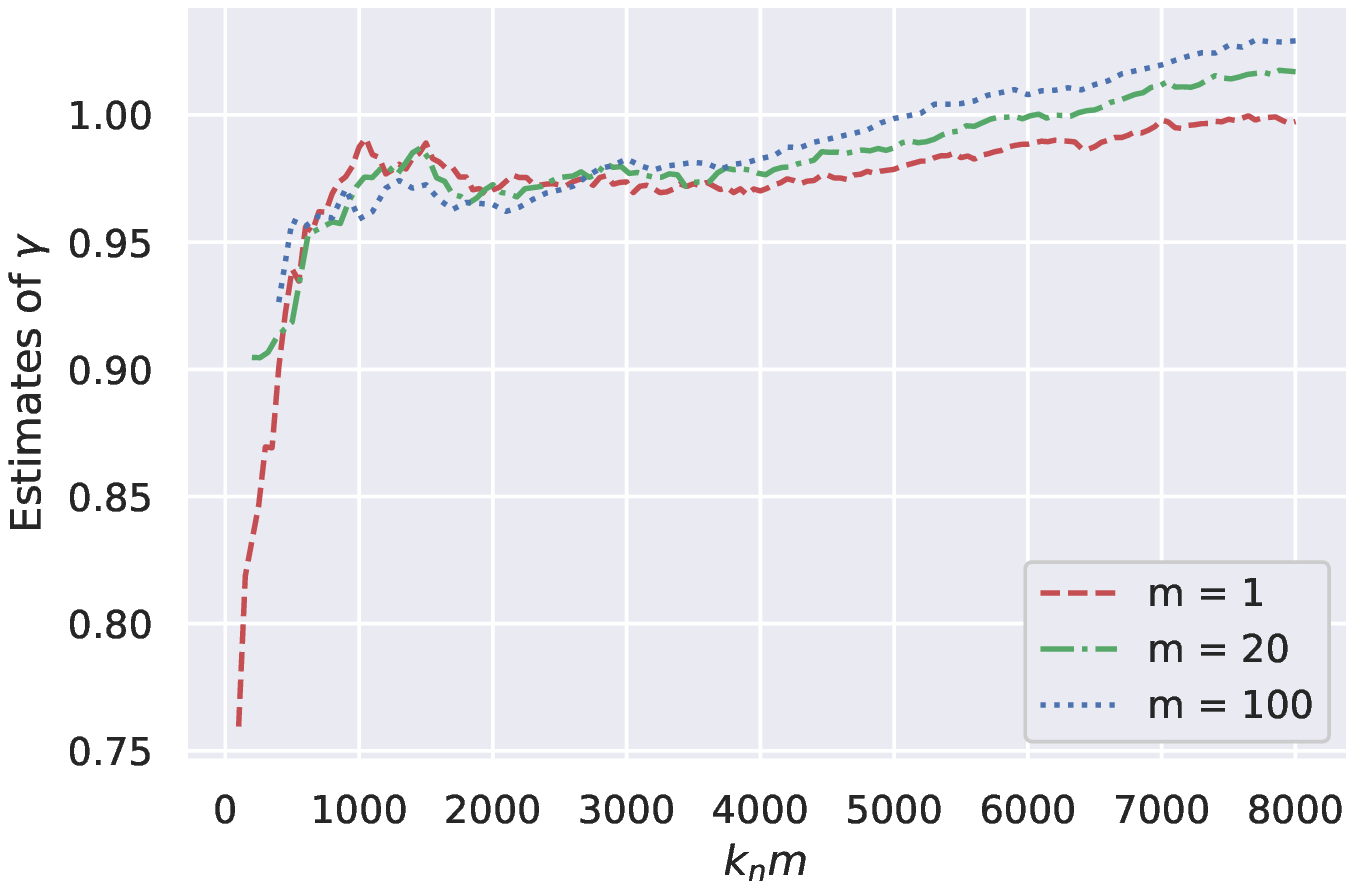}
}
\hspace{-3ex}
\subfigure[Burr]{
	\includegraphics[width=0.32\textwidth]{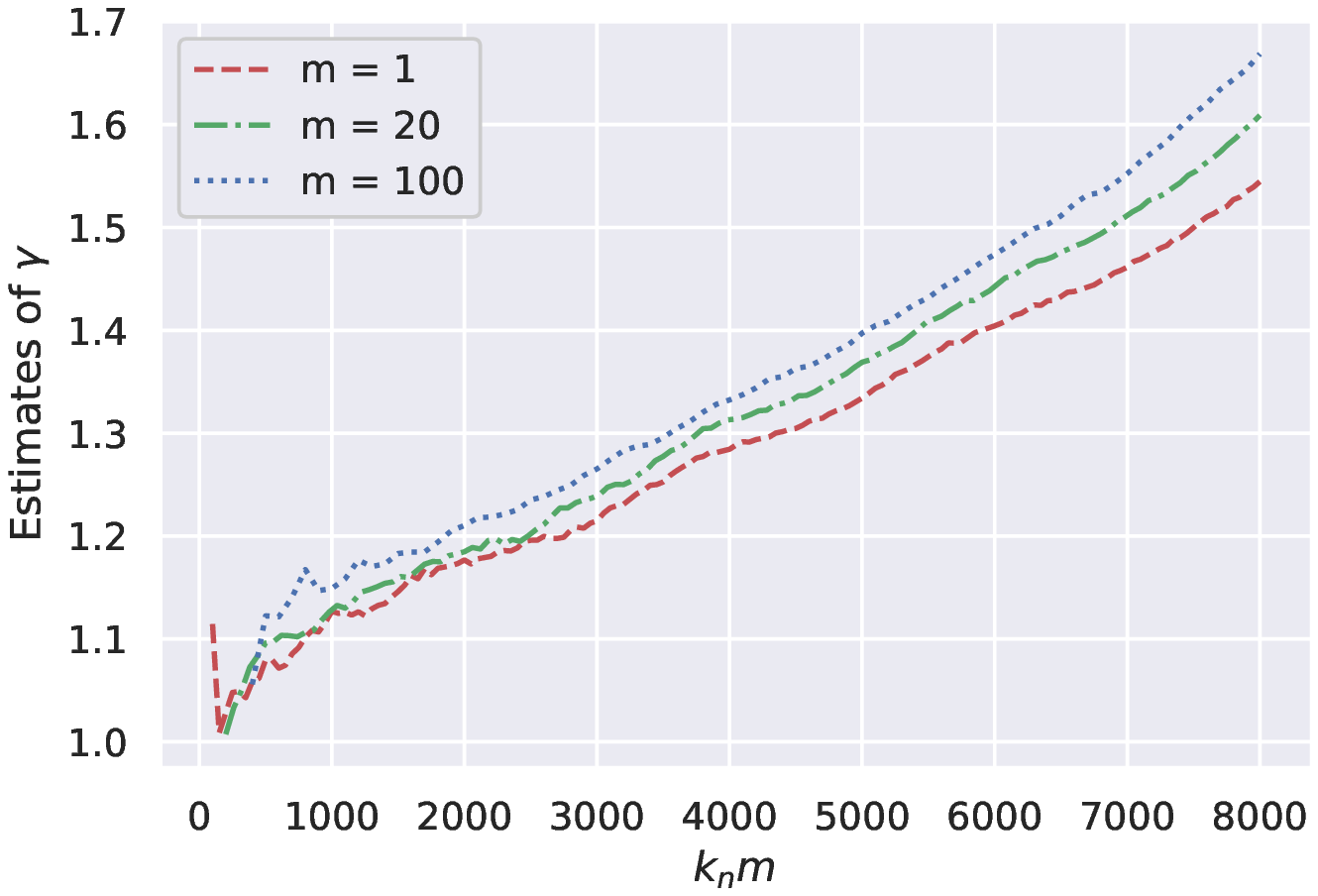}
}
\hspace{-3ex}
\subfigure[Absolute Cauchy]{
	\includegraphics[width=0.32\textwidth]{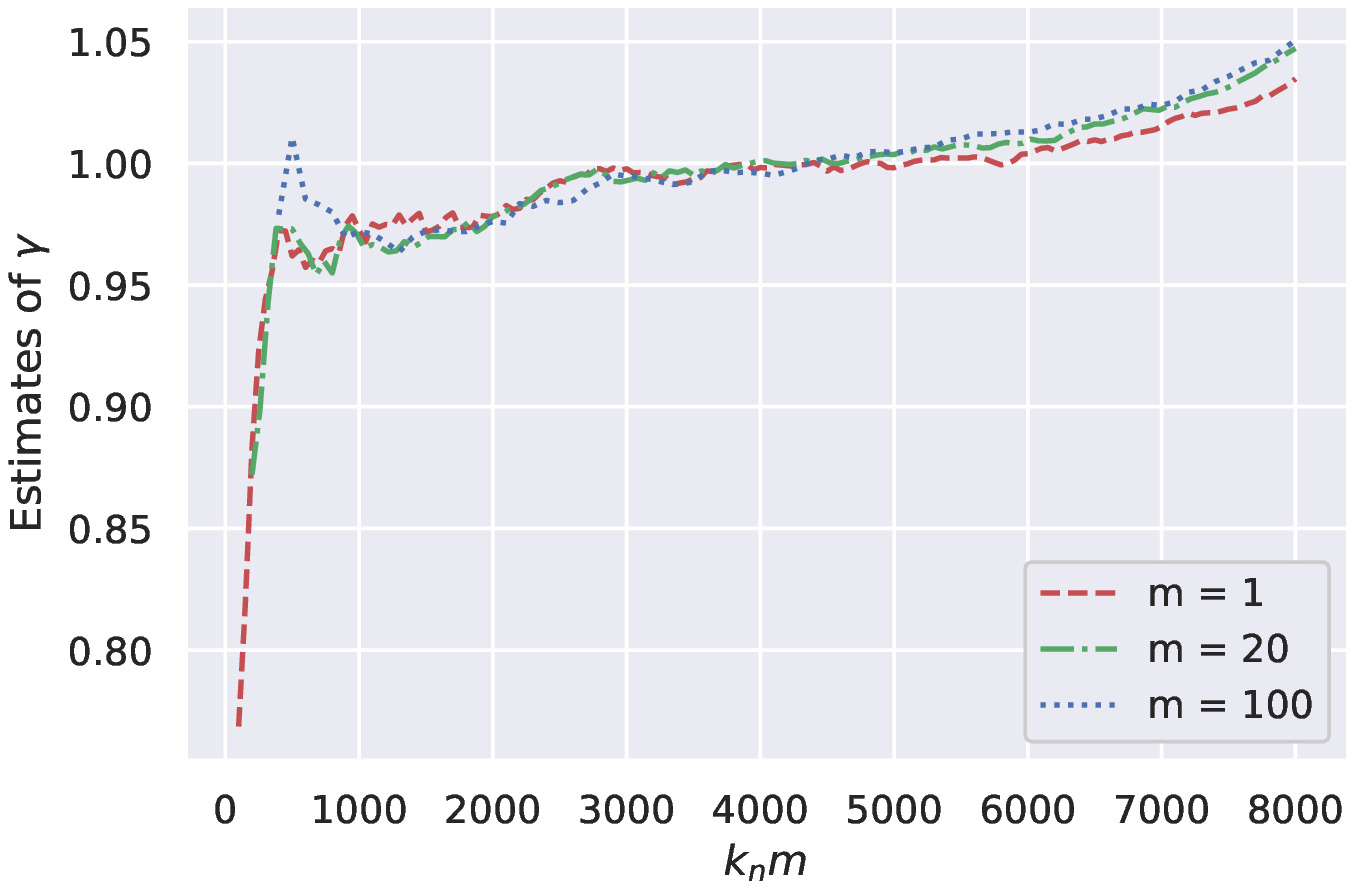}
}
\caption{Single sample performance. }
\label{Single sample}
\end{figure}

Finally, we examine the impact of choosing $k_{\rho}$. In this comparison, we fix $m=20$ and $\tau =0.5$, and  consider three choices of $k_{\rho} = [n^{0.96}],[n^{0.98}], [n^{0.99}]$. 
Figure \ref{figure rho} shows the plots of the RMSE against various levels of $k_n$. For the Fr\'echet and the absolute Cauchy distribution, the \df is not sensitive to the choice of $k_{\rho}$, while $k_{\rho} = [n^{0.98}]$ performing slight better for high level of $k_n$.  For the Burr distribution, $k_{\rho} = [n^{0.96}]$ yields slightly better performance. Nevertheless, the RMSEs for the three choices of $k_{\rho}$ are still comparable when $k_n$ is low.

\begin{figure}[htbp]
	\centering
\subfigcapskip=-8pt
\subfigure[Fr\'echet]{
\includegraphics[width=0.32\textwidth]{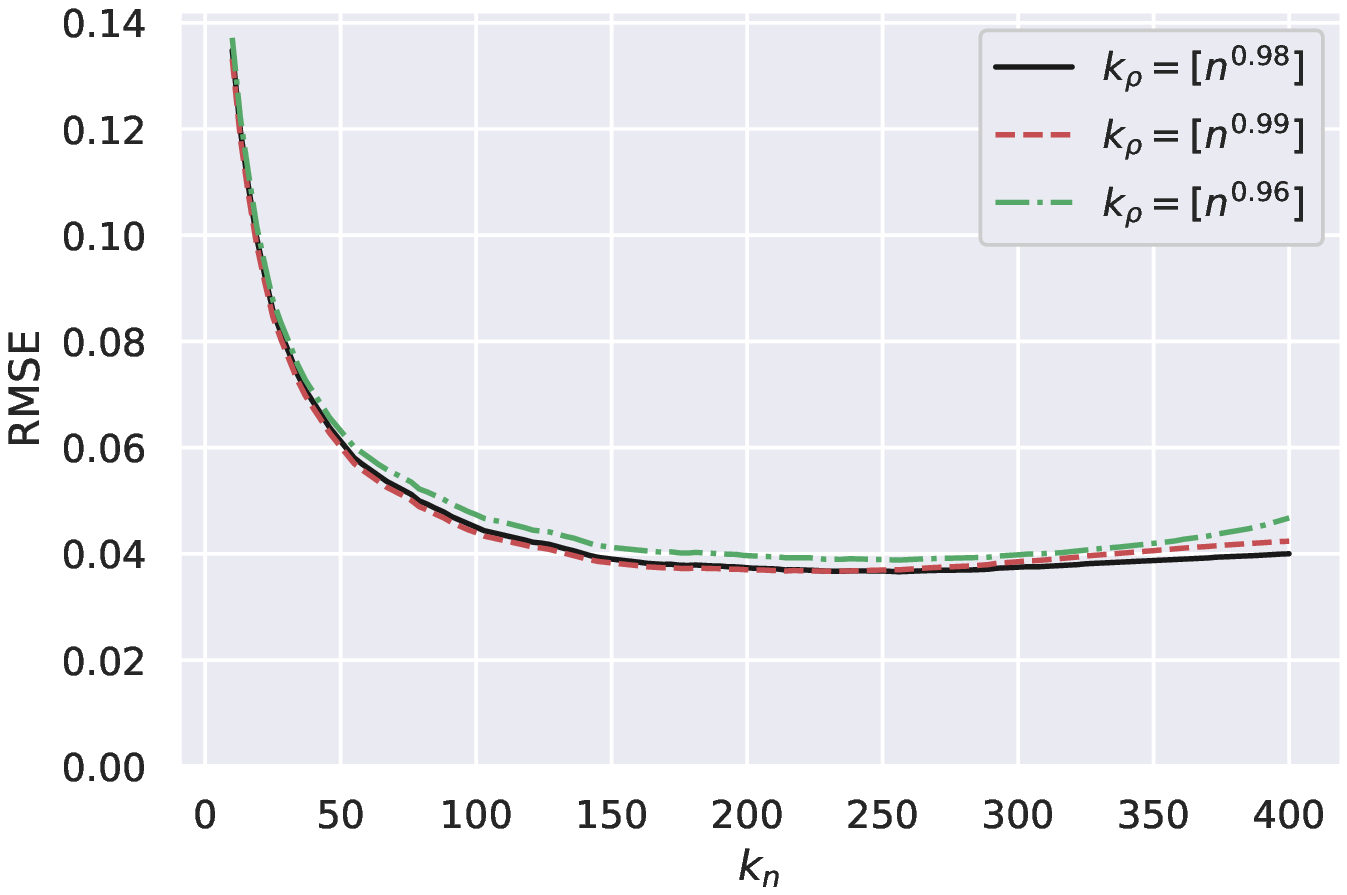}
}
\hspace{-3ex}
\subfigure[Burr]{
	\includegraphics[width=0.32\textwidth]{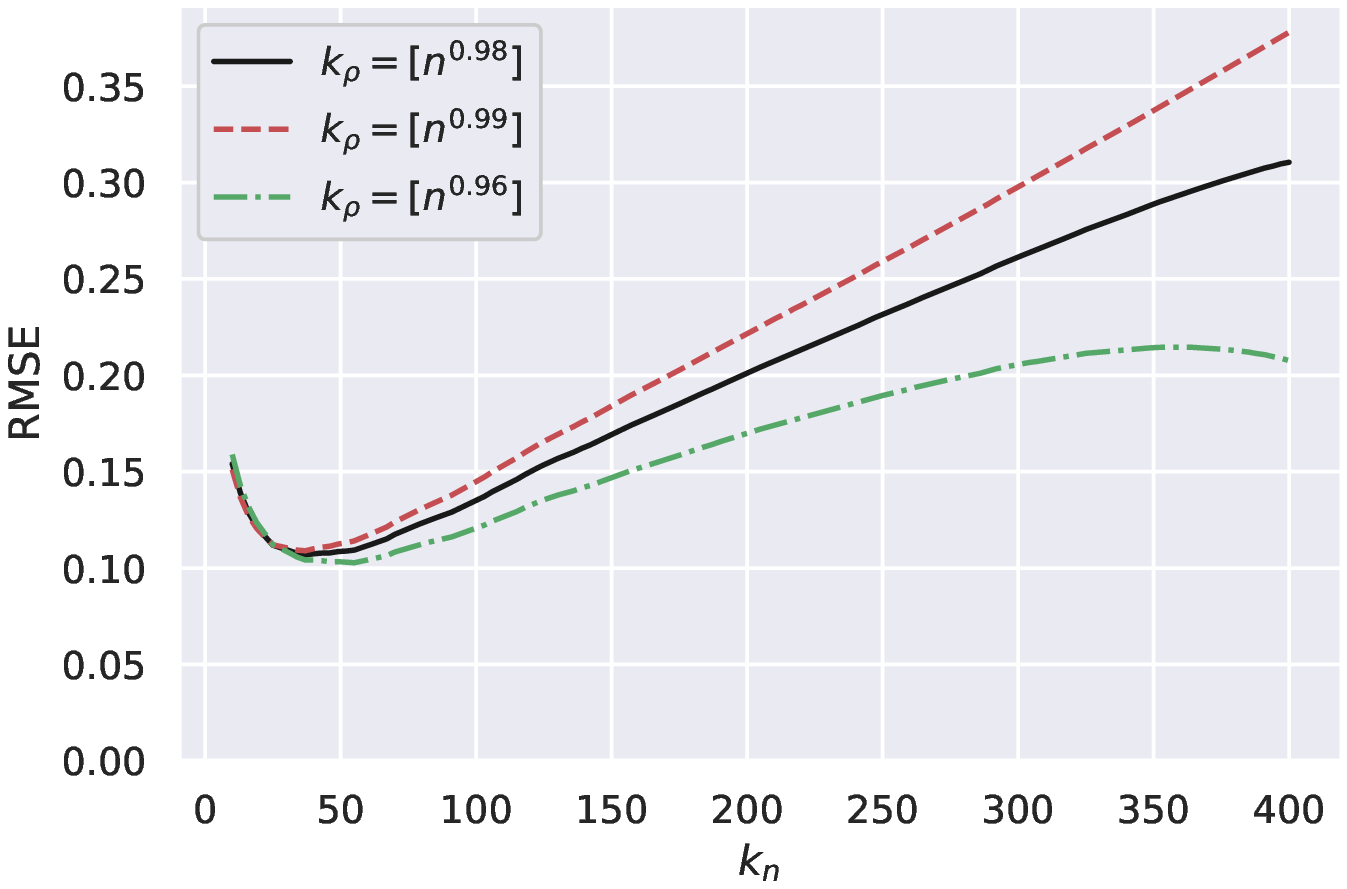}
}
\hspace{-3ex}
\subfigure[Absolute Cauchy]{
	\includegraphics[width=0.32\textwidth]{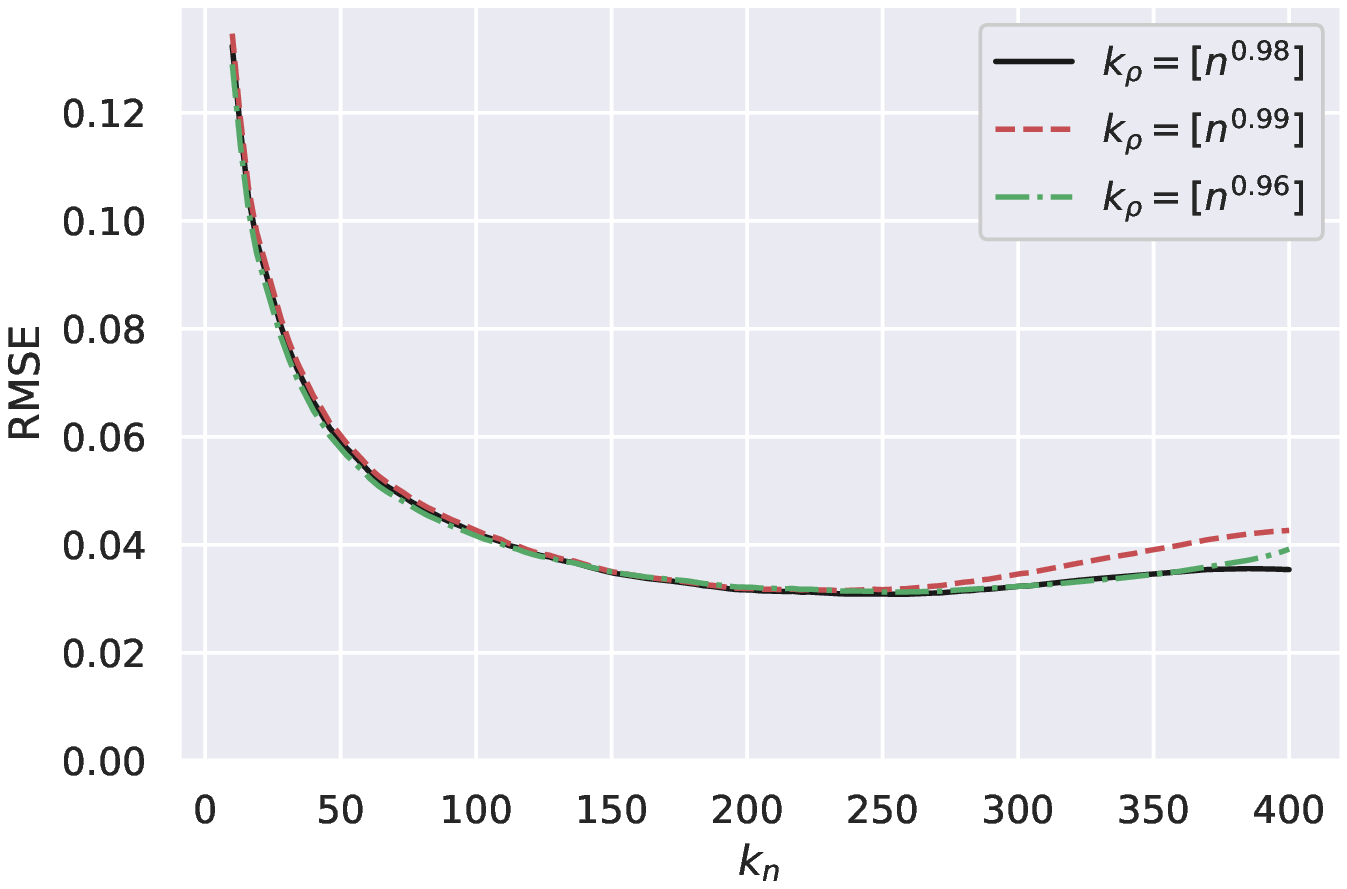}
}
\caption{Performance for different choices of $k_{\rho}$.}
\label{figure rho}
\end{figure}

\subsection{Further limitation for transmission}
Recall that for the \dfno, we need to transmit five statistics from each of the  $m$ machines to the central machine. If there are further limitations on the number of results that can be transmitted, such as only three, or even one statistic  can be transmitted,  the estimation procedure in Section 4.1 will not be applicable. In this subsection, we consider two alternative procedures for bias correction in the distributed inference setup with fewer number of transmissions.

Firstly, we consider a bias correction procedure if only three statistics  can be transmitted. We can estimate the second order parameter $\rho$ on each machine and transmit the estimates for $\rho$ to the central machine. The detailed procedures are given as follows:
\begin{itemize}
	\item On each machine $j$, we calculate  $R_{j,k_n}^{(1)}$, $R_{j,k_n}^{(2)}$, $R_{j,k_{\rho}}^{(1)}$, $R_{j,k_{\rho}}^{(2)}$, $R_{j,k_{\rho}}^{(3)}$.
	\item On each machine $j$, we estimate the second order parameter $\rho$ by 
	\begin{equation}\label{Eq: local rho estimation}
		\hat{\rho}_{j,k_{\rho},\tau}:=-3 \abs{\frac{T_{j,k_{\rho},\tau}-1}{T_{j,k_{\rho},\tau}-3}},
		\end{equation}
			with 
			$$
			T_{j,k_{\rho},\tau}:=
			\frac{\suit{R_{j,k_{\rho}}^{(1)}}^{\tau}-\suit{R_{j,k_{\rho}}^{(2)}/2}^{\tau/2}}{\suit{R_{j,k_{\rho}}^{(2)}/2}^{\tau/2}-\suit{R_{j,k_{\rho}}^{(3)}/6}^{\tau/3}},
			$$
			and   transmit  $\hat{\rho}_{j,k_{\rho},\tau}, R_{j,k_n}^{(1)}, R_{j,k_n}^{(2)}$  to the central machine. 
	\item On the central machine, we take the average of the $\hat{\rho}_{j,k_{\rho},\tau}, R_{j,k_n}^{(1)}, R_{j,k_n}^{(2)}$  to obtain
	$$
	\tilde{\rho}_{k_{\rho},\tau}=\frac{1}{m }\sum_{j=1}^m \hat{\rho}_{j,k_{\rho},\tau},\ R_{k_n}^{(1)} =\frac{1}{m }\sum_{j=1}^m R_{j,k_n}^{(1)}, \ R_{k_n}^{(2)} =\frac{1}{m }\sum_{j=1}^m R_{j,k_n}^{(2)}.
	$$
	\item On the central machine, we estimate the extreme value index by 
	$$
	\tilde{\gamma}_{k_n,k_{\rho},\tau}^{(2)}:=R^{(1)}_{k_{n}}-\frac{R_{k_n}^{(2)}-2\suit{R_{k_n}^{(1)}}^2}{2R_{k_n}^{(1)}\tilde{\rho}_{k_{\rho},\tau} (1-\tilde{\rho}_{k_{\rho},\tau})^{-1}}.	
	$$
\end{itemize}

Secondly, we consider a bias correction procedure if only one statistic can be transmitted.
We can conduct  bias correction on each machine and transmit the estimates using  the bias-corrected  Hill estimator to the central machine. Then we take the average of these estimates on the  central machine. 
In this procedure,  each machine only sends one statistic to the central machine. The detailed procedures are as follows:
\begin{itemize}
	\item On each machine $j$, we calculate  $R_{j,k_n}^{(1)}$, $R_{j,k_n}^{(2)}$, $R_{j,k_{\rho}}^{(1)}$, $R_{j,k_{\rho}}^{(2)}$, $R_{j,k_{\rho}}^{(3)}$ and  estimate the second order parameter $\rho$ by \eqref{Eq: local rho estimation}. 
	\item On each machine $j$, we estimate the extreme value index by 
	$$
	\tilde{\gamma}_{j,k_n,k_{\rho},\tau}:=R_{j,k_n}^{(1)}-\frac{R_{j,k_n}^{(2)}-2\suit{R_{j,k_n}^{(1)}}^2}{2R_{j,k_n}^{(1)}\hat{\rho}_{j,k_{\rho},\tau} (1-\hat{\rho}_{j,k_{\rho},\tau})^{-1}},
	$$
	and   transmit   the estimates $\tilde{\gamma}_{j,k_n,k_{\rho},\tau}$ to the central machine.
	\item  On the central machine, we take the average of these estimates  by 
	$$
	\tilde{\gamma}_{k_n,k_{\rho},\tau}^{(3)}: =\frac{1}{m}\sum_{j=1}^m	\tilde{\gamma}_{j,k_n,k_{\rho},\tau}.
	$$		
\end{itemize}

The asymptotic theories of these two estimators $\tilde{\gamma}_{k_n,k_{\rho},\tau}^{(2)}$ and $\tilde{\gamma}_{k_n,k_{\rho},\tau}^{(3)}$ are left for further study. We only provide a finite sample comparison between the proposed estimator and these two estimators. 

In this comparison, we fix $\tau=0.5$. Figure \ref{figure frechet simu2} shows the RMSE for the  Fr\'echet distribution. The figures  for the Burr distribution and the absolute Cauchy distribution  have similar patterns and are therefore omitted. 
We observe that  all three bias corrected  estimators $\tilde{\gamma}_{k_n,k_{\rho},\tau}$, $\tilde{\gamma}_{k_n,k_{\rho},\tau}^{(2)}$ and $\tilde{\gamma}^{(3)}_{k_n,k_{\rho},\tau}$ generally perform better than  the original distributed Hill estimator. In addition, 
$\tilde{\gamma}_{k_n,k_{\rho},\tau}$ and $\tilde{\gamma}^{(2)}_{k_n,k_{\rho},\tau}$ have similar performance for all three values of $m$ with $\tilde{\gamma}_{k_n,k_{\rho},\tau}$ performing slightly better for the  Fr\'echet distribution and $\tilde{\gamma}^{(2)}_{k_n,k_{\rho},\tau}$ performing slightly better for the absolute Cauchy distribution. 
 
The performance of $\tilde{\gamma}^{(3)}_{k_n,k_{\rho},\tau}$ is unstable when $m$ is at a high level. In this case, $n$ is at a low level. Therefore, conducting bias correction  on each machine is suboptimal since the bias correction procedure requires a relatively large sample size.

\begin{figure}[htbp]
	\centering
	\subfigcapskip=-8pt
	\subfigure[m=1]{
	\includegraphics[width=0.33\textwidth]{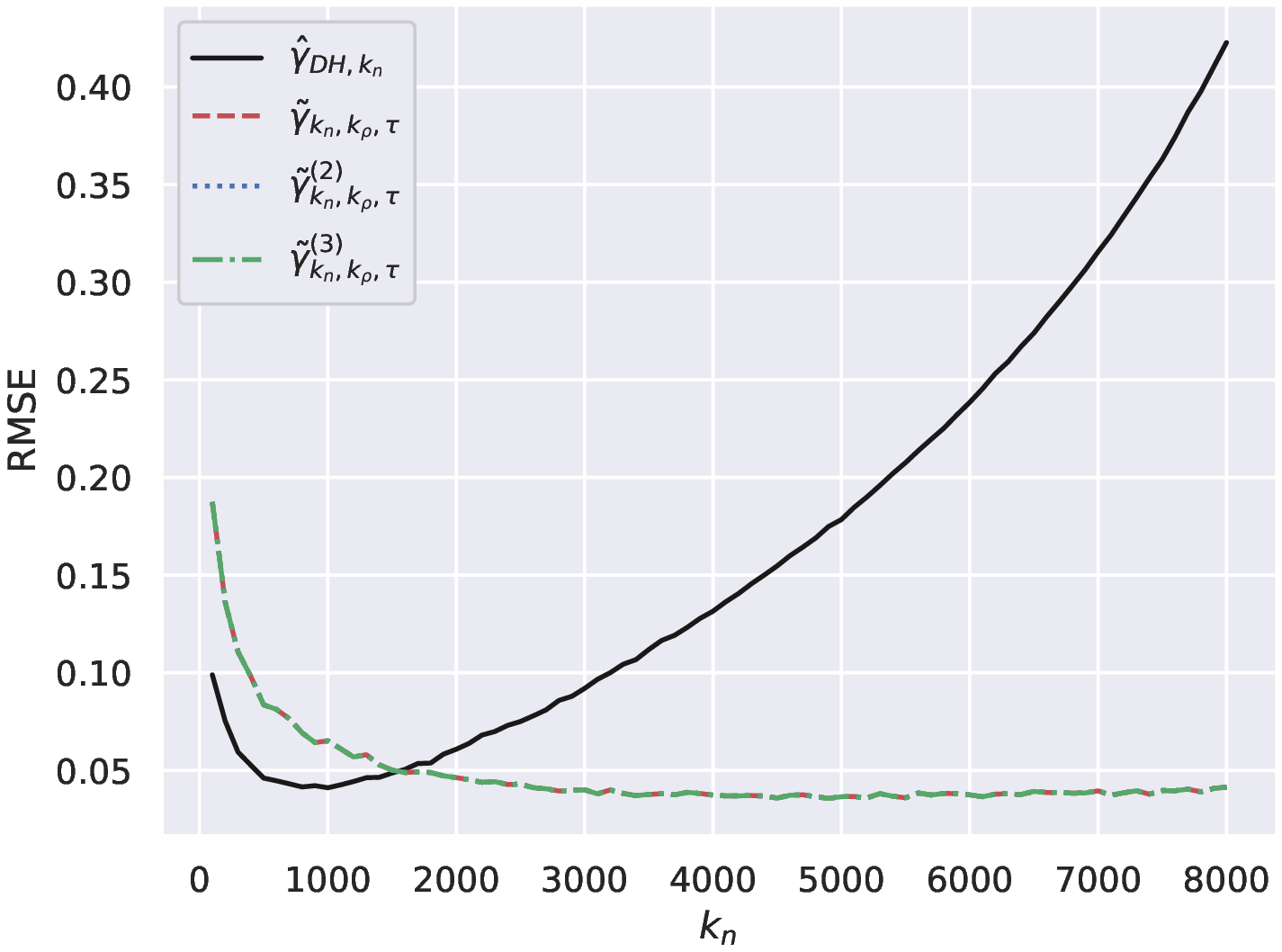}
	}
	\hspace{-7ex}
	\subfigure[m=20]{
	\includegraphics[width=0.33\textwidth]{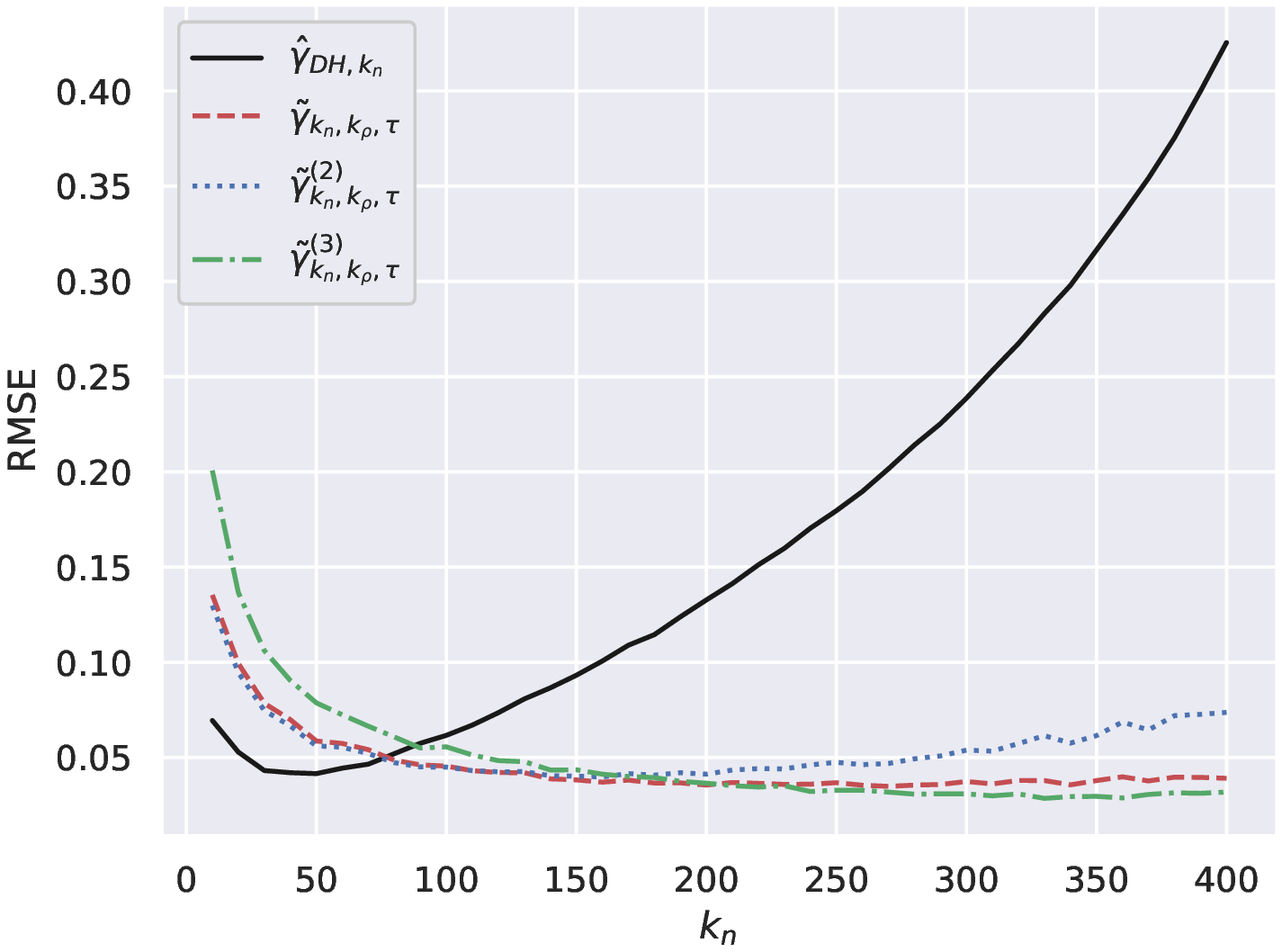}
	}
	\hspace{-7ex}
	\subfigure[m=100]{
		\includegraphics[width=0.33\textwidth]{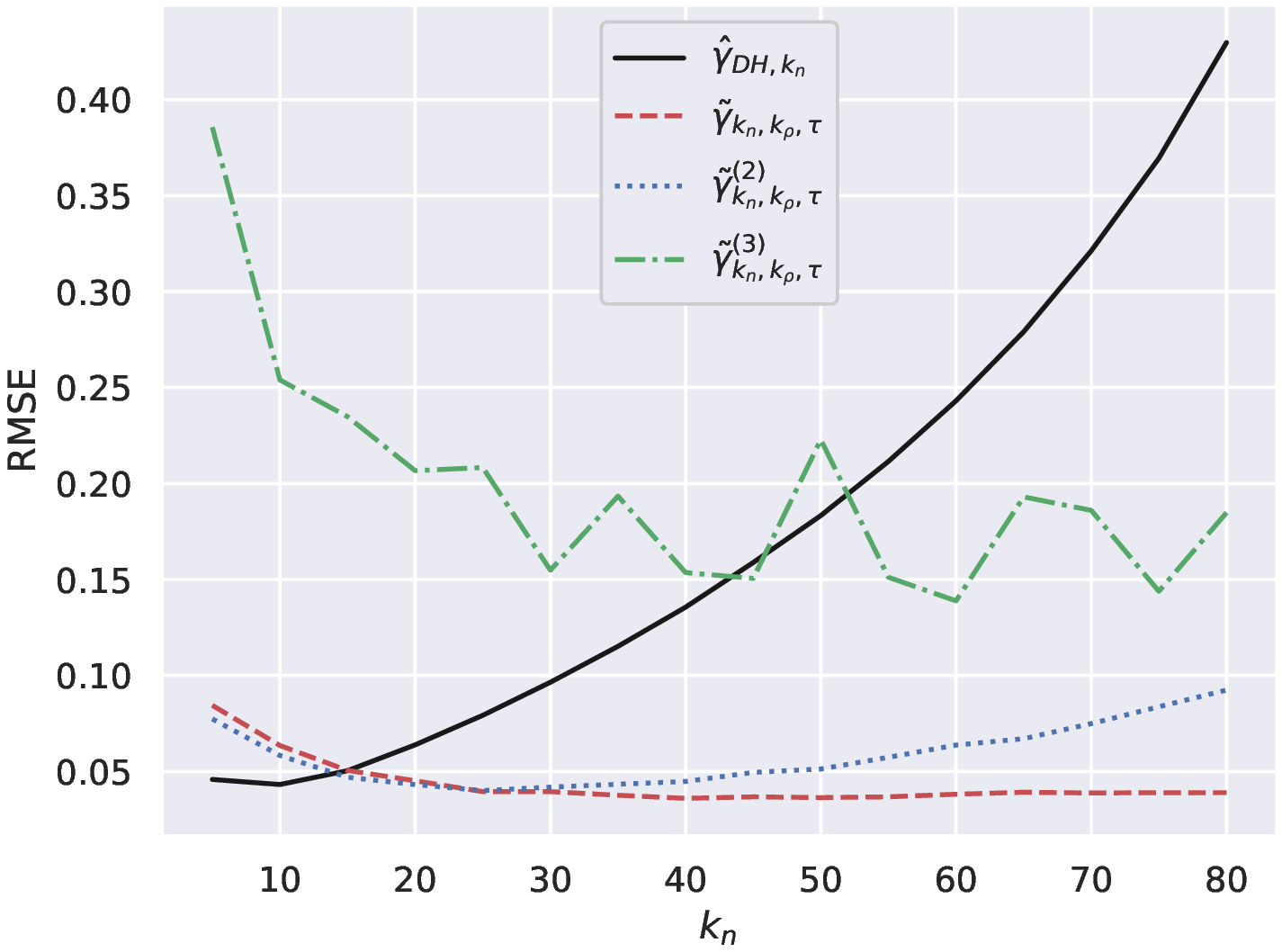}
		}
	\caption{RMSE for the Fr\'echet distribution.}
	\label{figure frechet simu2}
\end{figure}

\section{Discussion}
In this section, we discuss three extensions of our main results. The first two considers relaxing some technical assumptions in the current framework. The last one extends our result to estimating high quantiles.

First, we relax the assumption that the sample sizes on all machines are equal. Assume that $N$ observations are distributed stored in $m$ machines with $n_j = n_j(N)$ observations in  machine $j$, $j=1,2,\dots,m$, i.e. $N= \sum_{j=1}^m n_j$.  We assume that all $n_j, j=1,2,\dots,m$ diverge in the same order. Mathematically, there exist positive constants $c_1$ and $c_2$, such that for all $N\ge 1$,
$$
c_1\le \min_{1\le j \le m} n_j m/N \le \max_{1\le j \le m} n_j m/N \le c_2. 
$$

We choose $k_j, j=1,2,\dots,m$ such that the ratios $k_j/n_j$ are homogenous across all the $m$ machines, i.e.,
$$
k_1/n_1 = k_2/n_2=\cdots = k_m/n_m=:k/n,
$$
where $k = m^{-1}\sum_{j=1}^m k_j$ and $n =N/m$. 
Define 
$$
R_{k}^{(\alpha)}: =\sum_{j=1}^m \frac{n_j}{N} R_{j,k}^{(\alpha)}, \quad \alpha = 1,2,3.  
$$

  Under the same conditions as in Proposition \ref{theorem for expansion},  by following similar steps as in the proof of the proposition,    we can obtain that, as $N\to\infty$,
$$
\begin{aligned}
	&\sqrt{km}\suit{R_{k}^{(1)} -\gamma} \\
	 &=\gamma P_N^{(1)} +\sqrt{km}A_0(n/k) \frac{1}{m}\sum_{j=1}^m \frac{g(k_j,n_j,\rho)}{1-\rho} +\sqrt{km}A_0(n/k)B_0(n/k) \frac{1}{m}\sum_{j=1}^m \frac{g(k_j,n_j,\rho+\tilde{\rho})}{1-\rho-\tilde{\rho}}+o_P(1).
\end{aligned}
$$
Similar results hold for $R_k^{(2)}$ and $R_k^{(3)}$.

Then, with defining  the asymptotically unbiased
distributed estimator for the extreme value index as
$$
	\tilde{\gamma}_{k_n,k_{\rho},\tau}:=R_{k_{n}}^{(1)}-\frac{R_{k_{n}}^{(2)}-2\suit{R_{k_{n}}^{(1)}}^2}{2R_{k_{n}}^{(1)}\hat{\rho}_{k_{\rho},\tau} (1-\hat{\rho}_{k_{\rho},\tau})^{-1}},
$$
Theorem \ref{Theorem : gamma} still holds.

Second, we relax the assumption that all the data are drawn from the same distribution. We maintain the assumption that observations on the same machine follow the same distribution, but assume that observations across machines are not identically distributed. More specifically, denote the common distribution function of the observations in machine $j$ as $F_{m,j}, j=1,2,\dots,m$. We assume the heteroscedastic extreme model in \cite{einmahl2016statistics} holds for $F_{m,j}, j=1,2,\dots,m$: there exists a continous distribution function $F$ such that 
\begin{equation}\label{heter}
	\lim_{x\to\infty} \frac{1-F_{m,j}(x)}{1-F(x)} = c_{m,j},
\end{equation}
uniformly for all $1\le j\le m$ and all $m\in \mathbb{N}$ with $c_{m,j}$ uniformly bounded away from $0$ and $\infty$. 

Under this heteroscedastic extremes setup, the first order parameters $\gamma$ for all $F_{m,j}$, $ j=1,2,\dots,m$ are the same.  This heteroscedastic extreme setup is similar to the setup in Section 3 in \cite{chen2021distributed}. Its practical relevance can be again illustrated by the example of estimating tail risks in insurance claims. For a given type of insurance, claims in different insurance companies may not follow the same distribution due to the fact that different companies may be specialized in different segments of the market. Nevertheless, they may share the same shape parameter of the tail due to the underlying nature of the insured risk.

\cite{chen2021distributed} introduces additional assumptions to ensure that the heteroscedastic extremes assumption does not introduce an additional bias; see assumptions in Theorem 4 therein, particularly Condition D. Under the same assumption, by following similar techniques in the proof, we can show that the heteroscedastic extremes setup does not affect the statement in Theorem 2.

Third, we discuss how to obtain the \df for the high quantile $x(p_N) := U(1/p_N)$, where $p_N=O(1/N)$ as $N\to\infty$.
Motivated by \cite{de2016adapting},  we define the asymptotically unbiased distributed estimator for high quantile as 
$$
\hat{x}_{k_n, k_{\rho}, \tau}(p_N):=\frac{1}{m}\sum_{j=1}^m M_j^{(k_n+1)} \suit{\frac{k}{np_N}}^{\hat{\gamma}_{k_n,k_{\rho},\tau}} \suit{1-\frac{\suit{R_{k_n}^{(2)} - \suit{R_{k_n}^{(1)}}^2}\suit{1-\hat{\rho}_{k_{\rho},\tau}}^2}{2 R_{k,n}^{(1)} \suit{\hat{\rho}_{k_{\rho},\tau}}^2}}.
$$
Note that, the estimator $\hat{x}_{k_n, k_{\rho}, \tau}$ also adheres to a DC algorithm since each machine only sends \textit{six} values $\set{R_{j,k_n}^{(1)},R_{j,k_n}^{(2)},R_{j,k_{\rho}}^{(1)},R_{j,k_{\rho}}^{(2)},R_{j,k_{\rho}}^{(3)},M_j^{(k_n+1)}}$ to the central machine. Since $\hat{x}_{k_n, k_{\rho}, \tau}(p_N)$  are constructed  by $R_{k}^{(\alpha)}$ ($k=k_n$ and $ k_{\rho}$, $\alpha = 1,2,3$) and $m^{-1}\sum_{j=1}^m M_j^{(k_n+1)}$, the asymptotic theory of $\hat{x}_{k_n, k_{\rho}, \tau}(p_N)$  can be established using similar techniques as in the  proof of Theorem 4.2 in \cite{de2016adapting}. We leave the details to the readers.

\appendix
\section{Proofs}

\subsection{Preliminary}

\begin{lemma}\label{Lemma for Renyi}
	Let $Y, Y_1,\dots, Y_n$ be i.i.d.  Pareto (1)  random variables with distribution function $1-1/y, \ y\ge 1.$ Let $Y^{(1)} \ge \cdots \ge Y^{(n)}$ be the order statistics of $\set{Y_1,\dots,Y_n}$. Let $f$ be a function such that $\text{Var} \set{f(Y)}<\infty$. Then for any $k\ge 1$,
	$$
	\frac{1}{k}\sum_{i=1}^k f\suit{\frac{Y^{(i)}} {Y^{(k+1)}} }\stackrel{d}{=}\frac{1}{k}\sum_{i=1}^k f(Y_i^*),
	$$
	where $Y_1^*, Y_2^*,\ldots, Y_k^*$ are i.i.d. Pareto (1) random variables.
	Moreover, 
	$$
\sqrt{k}\set{\frac{1}{k}\sum_{i=1}^k f\suit{\frac{Y^{(i)}} {Y^{(k+1)}} }-\mathbb{E} f(Y) }
	$$
	is independent of $Y^{(k+1)}$ and asymptotically normally distributed with mean zero and variance $\text{Var} \set{f(Y)}$ as $n \to\infty$, provided that $k=k(n)\to \infty$ and $k/n \to 0$.
\end{lemma}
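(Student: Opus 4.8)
The plan is to reduce the whole statement to an \emph{exact} distributional identity for the top-$k$ normalized order statistics, after which the asymptotic claim follows from the classical central limit theorem. First I would pass to the exponential scale: writing $Y_i=e^{E_i}$ with $E_1,\dots,E_n$ i.i.d.\ standard exponential, the decreasing order statistics satisfy $Y^{(i)}=e^{E^{(i)}}$, where $E^{(1)}\ge\cdots\ge E^{(n)}$ are the decreasing exponential order statistics, so that $\log\suit{Y^{(i)}/Y^{(k+1)}}=E^{(i)}-E^{(k+1)}$.

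The heart of the argument is to show that, conditionally on $Y^{(k+1)}=t$ (equivalently on $E^{(k+1)}=\log t$), the vector $\suit{Y^{(1)}/t,\dots,Y^{(k)}/t}$ is distributed as the decreasing order statistics of $k$ i.i.d.\ Pareto(1) variables, \emph{and this conditional law does not depend on $t$}. I would obtain this from the memorylessness of the exponential distribution or, equivalently, from the R\'enyi representation of exponential spacings: given $E^{(k+1)}=\log t$, the excesses $E^{(i)}-\log t$ for $i=1,\dots,k$ are the order statistics of $k$ i.i.d.\ standard exponentials, whence $Y^{(i)}/t=e^{E^{(i)}-\log t}$ are the order statistics of $k$ i.i.d.\ Pareto(1) variables. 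Because this conditional distribution is free of $t$, two conclusions follow at once: the family $\suit{Y^{(1)}/Y^{(k+1)},\dots,Y^{(k)}/Y^{(k+1)}}$ is independent of $Y^{(k+1)}$, and its unconditional law is exactly that of $k$ ordered i.i.d.\ Pareto(1) variables.

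Since the statistic $\frac{1}{k}\sum_{i=1}^k f\suit{Y^{(i)}/Y^{(k+1)}}$ is symmetric in the ratios, ordering is irrelevant, and the previous step yields the exact identity $\frac{1}{k}\sum_{i=1}^k f\suit{Y^{(i)}/Y^{(k+1)}}\stackrel{d}{=}\frac{1}{k}\sum_{i=1}^k f(Y_i^*)$ with $Y_1^*,\dots,Y_k^*$ i.i.d.\ Pareto(1); this is the first claim, and it holds for every fixed $k$ with $k+1\le n$. Moreover, the normalized statistic is a measurable function of the ratios alone, so the independence from $Y^{(k+1)}$ established above delivers the second claim. For the asymptotic normality I would apply the identity once more to rewrite $\sqrt{k}\set{\frac{1}{k}\sum_{i=1}^k f\suit{Y^{(i)}/Y^{(k+1)}}-\mathbb{E}f(Y)}$ as $k^{-1/2}\sum_{i=1}^k\suit{f(Y_i^*)-\mathbb{E}f(Y)}$ in distribution, and invoke the Lindeberg--L\'evy central limit theorem: since $\mathrm{Var}\set{f(Y)}<\infty$ and $k=k(n)\to\infty$, this converges to $N\suit{0,\mathrm{Var}\set{f(Y)}}$.

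The bulk of the work, and the only place requiring genuine care, is the conditional-distribution step. The main obstacle is to justify rigorously that the conditional law of the normalized top-$k$ order statistics given $Y^{(k+1)}$ is free of the conditioning value; I would make this precise using the joint density of the order statistics (or the Markov/spacing structure of the exponential order statistics), and note that the scale invariance is exactly the self-similarity $P(Y/t>s\mid Y>t)=1/s=P(Y>s)$ of the Pareto(1) tail. Everything else is routine, including the role of the assumption $k/n\to0$, which only guarantees $k+1\le n$ eventually so that the identity is well defined and plays no further part.
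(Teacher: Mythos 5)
Your proposal is correct and follows essentially the same route as the paper: the paper simply transforms to the exponential scale via $\log Y \sim \mathrm{Exp}(1)$ and cites Lemma 3.2.3 of \citet{de2006extreme}, whose content (the R\'enyi/memorylessness argument showing the normalized top-$k$ ratios are distributed as ordered i.i.d.\ Pareto(1) variables independently of $Y^{(k+1)}$, followed by the classical CLT along $k(n)\to\infty$) is exactly what you reconstruct from first principles. Your only addition is inlining the proof of the cited lemma, including the correct observation that the exact distributional identity and independence hold for every fixed $k\le n-1$, with $k\to\infty$ needed only for the normality claim.
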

\begin{proof}[Proof of Lemma \ref{Lemma for Renyi}]
	This Lemma follows directly from Lemma 3.2.3 in \citet{de2006extreme} with the fact that $\log Y$ follows a standard exponential distribution.
\end{proof}

\begin{lemma}\label{Lemma for Expectation}
	Let $Y_1,\dots,Y_n$ be i.i.d.  Pareto (1) random variables and $Y^{(1)}\ge \cdots \ge Y^{(n)}$ be the  order statistics of $\set{Y_1,\dots,Y_n}$. Then for any $\rho<0$,
	$$
\mathbb{E}\set{\suit{\frac{k}{n}Y^{(k+1)}}^{\rho}}=g(k,n,\rho),
	$$
	where $g(k,n,\rho)$ is defined in \eqref{Def of g(k,n,rho)}.
	Moreover, if $k$ is a fixed integer, then $g(k,n,\rho) \to k^{\rho}\Gamma(k-\rho+1)/\Gamma(k+1)$ as $n \to \infty$. If $k$ is an intermediate sequence, i.e. $k\to \infty, k/n \to 0$ as $n \to \infty$, then, 
	$$
g(k,n,\rho)=1+\frac{1}{2}(\rho^2-\rho)k^{-1}-\frac{1}{2}(\rho^2-\rho)(n-\rho)^{-1}+O(k^{-2}).
	$$

\end{lemma}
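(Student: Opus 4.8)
The plan is to establish the exact identity by reducing the expectation to a Beta integral, and then to read off the two limiting regimes from the standard asymptotics of ratios of gamma functions. First I would compute $\mathbb{E}\{(Y^{(k+1)})^{\rho}\}$ directly. Since $Y^{(k+1)}$ is the $(k+1)$-st largest (equivalently the $(n-k)$-th smallest) of the $n$ observations, the order-statistic density formula with $F(x)=1-1/x$ and $f(x)=x^{-2}$ gives, for $x\ge 1$,
$$
f_{Y^{(k+1)}}(x)=\frac{n!}{(n-k-1)!\,k!}\,(1-1/x)^{n-k-1}(1/x)^{k}\,x^{-2}.
$$
Then $\mathbb{E}\{(Y^{(k+1)})^{\rho}\}=\int_1^\infty x^{\rho}f_{Y^{(k+1)}}(x)\,dx$, and the substitution $u=1/x$ turns this into $\frac{n!}{(n-k-1)!\,k!}\int_0^1 u^{k-\rho}(1-u)^{n-k-1}\,du$. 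Because $\rho<0$ guarantees $k-\rho+1>0$, this is the Beta integral $B(k-\rho+1,n-k)=\Gamma(k-\rho+1)\Gamma(n-k)/\Gamma(n-\rho+1)$; cancelling the factor $\Gamma(n-k)=(n-k-1)!$ yields $\mathbb{E}\{(Y^{(k+1)})^{\rho}\}=\Gamma(n+1)\Gamma(k-\rho+1)/\{\Gamma(k+1)\Gamma(n-\rho+1)\}$. Multiplying by $(k/n)^{\rho}$ reproduces exactly $g(k,n,\rho)$ as defined in \eqref{Def of g(k,n,rho)}.

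For the fixed-$k$ case I would invoke the classical ratio asymptotic $\Gamma(n+1)/\Gamma(n-\rho+1)\sim n^{\rho}$ as $n\to\infty$. Then $(k/n)^{\rho}\,\Gamma(n+1)/\Gamma(n-\rho+1)\to k^{\rho}$, while the factor $\Gamma(k-\rho+1)/\Gamma(k+1)$ does not depend on $n$, so $g(k,n,\rho)\to k^{\rho}\Gamma(k-\rho+1)/\Gamma(k+1)$, as claimed.

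For the intermediate-$k$ case ($k\to\infty$, $k/n\to 0$) I would take logarithms and apply the two-term expansion $\log\Gamma(z+a)-\log\Gamma(z+b)=(a-b)\log z+\frac{(a-b)(a+b-1)}{2z}+O(z^{-2})$, a consequence of Stirling's formula, to both ratios: with $z=k,\ (a,b)=(1-\rho,1)$ for the $k$-ratio and $z=n,\ (a,b)=(1,1-\rho)$ for the $n$-ratio. The leading logarithmic terms $-\rho\log k$ and $\rho\log n$ cancel against the $\rho\log(k/n)$ prefactor, leaving $\log g(k,n,\rho)=\frac{1}{2}(\rho^2-\rho)k^{-1}-\frac{1}{2}(\rho^2-\rho)n^{-1}+O(k^{-2})$. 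Exponentiating and using $(\log g)^2=O(k^{-2})$ then gives the stated expansion, and the term $n^{-1}$ may be rewritten as $(n-\rho)^{-1}$ since $k=o(n)$ forces $n^{-2}=o(k^{-2})$, so the two differ only within the remainder. The one step that requires genuine care is precisely this bookkeeping of orders: I must confirm that, under $k/n\to 0$, the $n^{-1}$ and $n^{-2}$ contributions sit correctly relative to the $O(k^{-2})$ remainder, and that the quadratic term produced by exponentiating a logarithm of size $O(k^{-1})$ is indeed $O(k^{-2})$; everything else is a routine computation once the Beta representation is in hand.
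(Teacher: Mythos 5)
Your proposal is correct and follows essentially the same route as the paper: the exact identity comes from the same order-statistic density reduced to a Beta integral (the paper evaluates the integral on the $y$-scale directly, you substitute $u=1/x$, which is the same computation), and both limiting regimes rest on Stirling-type asymptotics. The only cosmetic difference is in the intermediate-$k$ case, where you work on the log scale via $\log\Gamma(z+a)-\log\Gamma(z+b)=(a-b)\log z+\tfrac{(a-b)(a+b-1)}{2z}+O(z^{-2})$ while the paper Taylor-expands the multiplicative factors $\left(1-\rho/k\right)^{k-\rho+1/2}$ and $\left(1+\rho/(n-\rho)\right)^{n-\rho+1/2}$ obtained from two-term Stirling; your order bookkeeping is sound, since $k/n\to 0$ gives $n^{-2}=o(k^{-2})$ (justifying the interchange of $n^{-1}$ and $(n-\rho)^{-1}$ within the remainder) and $(\log g)^2=O(k^{-2})$ upon exponentiating.
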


\begin{proof}[Proof of Lemma \ref{Lemma for Expectation}]
	
$$
\begin{aligned}
	\mathbb{E}\set{\suit{\frac{k}{n}Y^{(k+1)}}^{\rho}} &= \frac{n!}{(n-k-1)!k!} \int_{1}^{\infty} \suit{1-\frac{1}{y}}^{n-k-1}\suit{\frac{1}{y}}^{k+2} \suit{\frac{k}{n}y}^{\rho}dy \\
	&=	\suit{\frac{k}{n}}^{\rho} \frac{n!}{(n-k-1)!k!}\int_{1}^{\infty} \suit{1-\frac{1}{y}}^{n-k-1}\suit{\frac{1}{y}}^{k+2-\rho} dy \\
	&= \suit{\frac{k}{n}}^{\rho} \frac{\Gamma(n+1)\Gamma(k-\rho+1)}{\Gamma(n-\rho+1)\Gamma(k+1)} \\
	&=g(k,n,\rho).
\end{aligned}
$$
We first handle the case when $k$ is a fixed integer. By the Stirling's formula,
$$
\Gamma(x)=\sqrt{2\pi(x-1)}\set{e^{-1}(x-1)}^{x-1}\set{1+(x-1)^{-1}/12+O(1/x^2)}
$$ 
as $x\to \infty$, we have that, as $n \to \infty$,
$$
\begin{aligned}
\Gamma(n+1)\sim (2\pi n)^{1/2}\suit{\frac{n}{e}}^{n}, \quad \Gamma(n-\rho+1)\sim \set{2\pi(n-\rho)}^{1/2}\suit{\frac{n-\rho}{\rho}}^{n-\rho},
\end{aligned}
$$
which leads to  
$$
g(k,n,\rho)\to k^{\rho}\frac{\Gamma(k-\rho+1)}{\Gamma(k+1)}.
$$

Next, we handle the case when $k$ is an intermediate sequence. 	By the Stirling's formula,
	we have that, as $n \to \infty$,
$$
\begin{aligned}
g(k,n,\rho)&=\suit{1-\frac{\rho}{k}}^{k-\rho+1/2}\suit{1+\frac{\rho}{n-\rho}}^{n-\rho+1/2} \frac{1+n^{-1}/12+O(n^{-2})}{1+(n-\rho)^{-1}/12+O(n^{-2})} \frac{1+(k-\rho)^{-1}/12+O(k^{-2})}{1+k^{-1}/12+O(k^{-2})}\\
&=\suit{1-\frac{\rho}{k}}^{k-\rho+1/2}\suit{1+\frac{\rho}{n-\rho}}^{n-\rho+1/2}\set{1+O(n^{-2})}\set{1+O(k^{-2})}.\\
\end{aligned}
$$ 
By the Taylor's formula and some direct calculation, we obtain that, as $n\to \infty$,
$$
\suit{1-\frac{\rho}{k}}^{k-\rho+1/2}  = e^{-\rho}\set{1+\frac{1}{2}(\rho^2-\rho)k^{-1}+O(k^{-2})},  
$$
and 
$$
\suit{1+\frac{\rho}{n-\rho}}^{n-\rho+1/2} =e^{\rho}\set{1-\frac{1}{2}(\rho^2-\rho)(n-\rho)^{-1}+O(n^{-2})}.
$$
It follows that, as $n \to \infty$,
$$
g(k,n,\rho)=1+\frac{1}{2}(\rho^2-\rho)k^{-1}-\frac{1}{2}(\rho^2-\rho)(n-\rho)^{-1}+O(k^{-2}).
$$

\end{proof}

\begin{lemma}\label{Lemma for Lyapunov}
	Let $Y_1,\dots,Y_n$ be i.i.d.  Pareto (1) random variables and $Y^{(1)}\ge \cdots \ge Y^{(n)}$ be the  order statistics of $\set{Y_1,\dots,Y_n}$.  Define for $\rho<0$, 
	$$
	Z_k=\frac{1}{k}\sum_{i=1}^k \frac{\suit{Y^{(i)}/Y^{(k+1)}}^{\rho}-1}{\rho}.
	$$
	Then, the following results hold. 
	\begin{itemize}
		\item[(i)] 	For fixed $k$, $\mathbb{E}(Z_k^a)<\infty$, for $a=1,2,3,4$. Moreover, $\mathbb{E}\suit{Z_k^2}-\set{\mathbb{E}\suit{Z_k}}^2>0$.
		\item[(ii)] For  intermediate $k$, i.e., $k = k(n)\to \infty, k/n \to 0$ as $n\to \infty$, and  $a=1,2,3,4$, 
		$$
		\mathbb{E}\suit{Z_k^a} = \frac{1}{(1-\rho)^a}\set{1+\frac{a(a-1)}{2(1-2\rho)}\frac{1}{k}+O(k^{-2})}.
		$$
	\end{itemize}
\end{lemma}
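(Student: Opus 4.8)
The plan is to reduce both parts to a statement about an average of i.i.d.\ random variables via the R\'enyi representation in Lemma \ref{Lemma for Renyi}. Take $f(y) = (y^{\rho}-1)/\rho$. For $\rho<0$ and $y\ge 1$ one has $y^{\rho}\in(0,1]$, so $f(y)\in[0,1/\abs{\rho})$ is bounded, whence $\text{Var}\set{f(Y)}<\infty$ and the lemma applies for every $k\ge 1$. It gives $Z_k \stackrel{d}{=} k^{-1}\sum_{i=1}^k W_i$, where $W_i = \suit{(Y_i^*)^{\rho}-1}/\rho$ are i.i.d.\ copies of $W:=(Y^{\rho}-1)/\rho$ with $Y$ a Pareto$(1)$ variable. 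The crucial structural observation is that $W$ is bounded, taking values in $[0,1/\abs{\rho})$, so all its moments are finite.

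I would next compute the raw moments of $W$. Since $\mathbb{E}(Y^{j\rho}) = \int_1^{\infty} y^{j\rho-2}\,dy = (1-j\rho)^{-1}$ for every $j\ge 1$ (the integral converges because $j\rho-2<-1$), expanding $(Y^{\rho}-1)^j$ binomially yields closed forms. In particular $\mu := \mathbb{E}(W) = \rho^{-1}\suit{(1-\rho)^{-1}-1} = (1-\rho)^{-1}$, and after simplification $\mu_2 := \text{Var}(W) = \mathbb{E}(W^2)-\mu^2 = \bra{(1-2\rho)(1-\rho)^2}^{-1}$. Part (i) then follows at once: finiteness of $\mathbb{E}(Z_k^a)$ for $a=1,2,3,4$ is inherited from the boundedness of $W$, while $\mathbb{E}(Z_k^2)-\set{\mathbb{E}(Z_k)}^2 = \text{Var}(Z_k) = \mu_2/k > 0$ because $\mu_2>0$ whenever $\rho<0$.

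For part (ii), with $k$ intermediate, I would write $\bar W := k^{-1}\sum_{i=1}^k W_i = \mu + S/k$, where $S := \sum_{i=1}^k (W_i-\mu)$ is a sum of i.i.d.\ centered variables, and expand
$$
\mathbb{E}(\bar W^a) = \sum_{j=0}^a \binom{a}{j}\mu^{a-j} k^{-j}\,\mathbb{E}(S^j).
$$
Using $\mathbb{E}(S)=0$, $\mathbb{E}(S^2)=k\mu_2$, $\mathbb{E}(S^3)=k\mu_3$, and $\mathbb{E}(S^4)=k\mu_4+3k(k-1)\mu_2^2$, the terms $j=3$ and $j=4$ are of order $k^{-2}$, so only $j=0$ and $j=2$ contribute at order $k^{-1}$, giving $\mathbb{E}(\bar W^a) = \mu^a + \binom{a}{2}\mu^{a-2}\mu_2\,k^{-1} + O(k^{-2})$. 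Factoring out $\mu^a$ and inserting $\mu_2/\mu^2 = (1-2\rho)^{-1}$ and $\mu^a=(1-\rho)^{-a}$ reproduces exactly the claimed expansion, uniformly for $a\in\set{1,2,3,4}$.

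The main computational obstacle is the combinatorial bookkeeping of the higher moments of $S$, in particular verifying $\mathbb{E}(S^4)=k\mu_4+3k(k-1)\mu_2^2$ and confirming that the $j\ge 3$ contributions are genuinely $O(k^{-2})$, so that the $k^{-1}$ coefficient depends on $\mu_2$ alone. Everything else is direct substitution, and the boundedness of $W$ lets me avoid any integrability concerns when passing moments through the R\'enyi identity.
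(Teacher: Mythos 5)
Your proposal is correct and follows essentially the same route as the paper: both reduce $Z_k$ via the R\'enyi representation (Lemma \ref{Lemma for Renyi}) to an average of i.i.d.\ copies of $W=(Y^{\rho}-1)/\rho$ (the paper identifies these as generalized Pareto with raw moments $a!/\{(1-a\rho)\cdots(1-\rho)\}$, which match your binomial computation via $\mathbb{E}(Y^{j\rho})=(1-j\rho)^{-1}$), and then extract the $k^{-1}$ coefficient from the moments. The only difference is organizational: you expand around the centered sum $S$ once for all $a\in\set{1,2,3,4}$, while the paper expands the raw moments multinomially case by case; your variance $\mu_2=\bra{(1-2\rho)(1-\rho)^2}^{-1}$ and the resulting coefficient $\binom{a}{2}\mu_2/\mu^2=\frac{a(a-1)}{2(1-2\rho)}$ agree exactly with the paper's computations.
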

\begin{proof}[Proof of Lemma \ref{Lemma for Lyapunov}]
	By Lemma \ref{Lemma for Renyi}, we have that,
	$$
	Z_k \stackrel{d}{=} \frac{1}{k}\sum_{i=1}^k \frac{\suit{Y_i^*}^{\rho}-1}{\rho},
	$$
	where $Y_1^*,\dots,Y_k^*$ are i.i.d. Pareto (1) random variables. Denote $T_i=\set{(Y_i^*)^{\rho}-1}/\rho$, for $i=1,\dots,k$ and $Z_k= k^{-1}\sum_{i=1}^k T_i$. Then, $T_i, i=1,\dots,k$ follows the generalized Pareto distribution with the cumulative distribution function $F(t) = 1-(1+\rho t)^{-1/\rho}$. Thus, we have that for $a=1,2,3,4$,
	$$
\mathbb{E}(T_i^a)=\frac{a!}{(1-a\rho)\cdots(1-\rho)}.
	$$

First, we handle the case when $k$ is fixed. The result is obvious  since $kZ_k$ is a finite sum of i.i.d. generalized Pareto random variables with shape parameter $\rho<0$.

Next, we handle the case when $k$ is an intermediate sequence.  For $a=1$, we have that, $E(Z_k)=E(T_i)=(1-\rho)^{-1}$. 

For $a=2$, we have that,
$$
\begin{aligned}
	\mathbb{E}\suit{Z_k^2} &= \frac{1}{k^2}\set{\sum_{i=1}^k E\suit{T_i^2} +\sum_{i\ne j}\mathbb{E} \suit{T_i} \mathbb{E}\suit{T_j}} \\
	&=\frac{1}{k^2}\bra{k E\suit{T_i^2} +k(k-1) \set{\mathbb{E} \suit{T_i}}^2} \\
	& = \frac{1}{(1-\rho)^2}+\frac{1}{k}\frac{1}{(1-2\rho)(1-\rho)^2}.
\end{aligned}
$$
 
For $a=3$, we have that 
$$
\begin{aligned}
	\mathbb{E}\suit{Z_k^3} &= \frac{1}{k^2}\set{\sum_{i=1}^k \mathbb{E}\suit{T_i^3} +\sum_{i = j \ne l}\mathbb{E} \suit{T_i T_j} \mathbb{E}\suit{T_l}+ \sum_{i\ne j \ne l}\mathbb{E} \suit{T_i} \mathbb{E}\suit{T_j}\mathbb{E}\suit{T_l}} \\
	&=\frac{1}{k^3}\bra{k \mathbb{E}\suit{T_i^3} +3k(k-1) \mathbb{E} \suit{T_i^2} \mathbb{E}\suit{T_i}  +k(k-1)(k-2) \set{\mathbb{E}(T_i)}^3}\\
	& = \frac{1}{(1-\rho)^3}+\frac{1}{k}\frac{3}{(1-2\rho)(1-\rho)^3}+O(k^{-2}).
\end{aligned}
$$

The term  $\mathbb{E} \suit{Z_k^4}$ can be handled in a similar way as that for handling $\mathbb{E}\suit{Z_k^3}$.
\end{proof}

\begin{lemma}\label{lemma for third order equality}
	Assume that the distribution function $F$ satisfies the third order condition \eqref{third order condition}. Then there exist two functions $A_0(t)\sim A(t)$ and $B_0(t)=O\set{B(t)}$ as $t\to\infty$, such that for any $\delta>0$, there exists  a $t_0 = t_0(\delta)>0$, for all $t\ge t_0$ and $tx\ge t_0$,
	$$
\abs{\frac{\frac{\log U(tx)-\log U(t)-\gamma \log x}{A_0(t)}-\frac{x^{\rho}-1}{\rho}}{B_0(t)}-\frac{x^{\rho+\tilde{\rho}}-1}{\rho+\tilde{\rho}}}\le \delta x^{\rho+\tilde{\rho}}\max(x^{\delta},x^{-\delta}).
	$$
\end{lemma}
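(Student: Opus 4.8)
The plan is to reduce the statement to a \emph{uniform} (Potter-type) form of the third-order condition \eqref{third order condition} and then to repair the normalising functions by a purely algebraic relabelling. Throughout I write $H(t,x):=\log U(tx)-\log U(t)-\gamma\log x$, $p(x):=(x^{\rho}-1)/\rho$, and $q(x):=(x^{\rho+\tilde{\rho}}-1)/(\rho+\tilde{\rho})$, so that the conclusion of the lemma is the uniform bound for
$$
\frac{H(t,x)/A_0(t)-p(x)}{B_0(t)}\longrightarrow q(x).
$$
The hypothesis \eqref{third order condition}, on the other hand, is the \emph{pointwise} convergence
$$
\frac{H(t,x)/A(t)-p(x)}{B(t)}\longrightarrow \phi(x):=\frac{1}{\tilde{\rho}}\suit{q(x)-p(x)}.
$$
The only discrepancy between what is given and what is wanted is thus the limit function, $\phi$ versus $q$; I would remove it by relabelling $A$ and $B$.

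Concretely, assuming $\tilde{\rho}<0$ (the degenerate case $\tilde{\rho}=0$, where $q=p$ and $\phi$ takes its logarithmic limiting form, I would treat separately by continuity), I set
$$
A_0(t):=A(t)\suit{1-\frac{B(t)}{\tilde{\rho}}},\qquad B_0(t):=\frac{B(t)}{\tilde{\rho}-B(t)}.
$$
Since $B(t)\to 0$ one has $A_0(t)/A(t)\to 1$ and $B_0(t)/B(t)\to 1/\tilde{\rho}$, so indeed $A_0\sim A$ and $B_0=O\set{B}$, as required. A one-line computation then yields the \emph{exact} identity
$$
\frac{H(t,x)/A_0(t)-p(x)}{B_0(t)}=\tilde{\rho}\,\frac{H(t,x)/A(t)-p(x)}{B(t)}+p(x),
$$
valid for every $t$ and $x>0$. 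Because $\tilde{\rho}\,\phi(x)+p(x)=q(x)$, the right-hand side converges to $q(x)$, and, more importantly,
$$
\abs{\frac{H(t,x)/A_0(t)-p(x)}{B_0(t)}-q(x)}=\abs{\tilde{\rho}}\,\abs{\frac{H(t,x)/A(t)-p(x)}{B(t)}-\phi(x)}.
$$
Hence the Potter bound sought for $(A_0,B_0)$ is, up to the harmless factor $\abs{\tilde{\rho}}$, equivalent to the same type of bound for the original pair $(A,B)$ with limit $\phi$.

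It therefore remains to upgrade the pointwise convergence in \eqref{third order condition} to the uniform inequality
$$
\abs{\frac{H(t,x)/A(t)-p(x)}{B(t)}-\phi(x)}\le \delta'\,x^{\rho+\tilde{\rho}}\max\suit{x^{\delta},x^{-\delta}},
$$
valid for all $t\ge t_0$ and $tx\ge t_0$ after passing to asymptotically equivalent versions $\bar{A}\sim A$, $\bar{B}=O\set{B}$ (to which the relabelling above is then applied). This is the genuinely analytic step and, I expect, the main obstacle: it is the third-order analogue of the uniform second-order inequality (see \citet{de2006extreme}, Theorem 2.3.9), and the dominating exponent $\rho+\tilde{\rho}$ is exactly the decay rate of $\phi$ as $x\to 0$. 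I would obtain it either by citing the corresponding uniform third-order inequality from the bias-correction literature (for instance the appendix of \citet{de2016adapting}), or by deriving it from the second-order uniform inequality applied successively to the nested auxiliary functions, which simultaneously furnishes the versions $\bar{A},\bar{B}$ with the correct majorant. Once this inequality is in hand, taking $\delta'=\delta/\abs{\tilde{\rho}}$ and feeding it through the algebraic identity above delivers the claimed bound, completing the proof.
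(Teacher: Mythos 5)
Your proposal is correct in substance, and once unwound it lands on the same route as the paper, whose entire proof is the single line: apply Theorem B.3.10 in \citet{de2006extreme} to $f(t):=\log U(t)-\gamma\log t$. The point of that one-liner is that the third order condition \eqref{third order condition} is \emph{exactly} the second-order extended-regular-variation condition for $f$, with first-order index $\rho$, auxiliary function $A$, second-order parameter $\tilde{\rho}$ and rate function $B$; Theorem B.3.10 is the standard Drees-type uniform inequality for precisely that setting, and it already delivers both ingredients you separate out: the ``suitable versions'' $A_0\sim A$, $B_0=O\set{B}$ \emph{and} the simplified limit $(x^{\rho+\tilde{\rho}}-1)/(\rho+\tilde{\rho})$ in place of $\phi(x)=\tilde{\rho}^{-1}\suit{q(x)-p(x)}$, with exactly the majorant $\delta x^{\rho+\tilde{\rho}}\max(x^{\delta},x^{-\delta})$. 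So the step you flag as ``the genuinely analytic obstacle'' is what the cited theorem provides off the shelf, while your explicit relabelling $A_0(t)=A(t)\suit{1-B(t)/\tilde{\rho}}$, $B_0(t)=B(t)/\suit{\tilde{\rho}-B(t)}$ and the exact identity are a correct, transparent reconstruction of what the theorem's version-adjustment does internally; indeed your computation $B_0(t)/B(t)\to 1/\tilde{\rho}\neq 1$ neatly explains why the lemma claims only $B_0=O\set{B}$ rather than $B_0\sim B$. Your fallback citation is also apt: \citet{de2016adapting} states this same lemma and proves it exactly as here, via Theorem B.3.10 applied to $\log U(t)-\gamma\log t$ --- though note that citing a uniform inequality with limit $\phi$ and then relabelling would do the work twice, since the available statements already come in the relabelled form you are trying to reach.

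Two caveats. First, your dismissal of $\tilde{\rho}=0$ ``by continuity'' is not a proof, and this is the one place your route genuinely degenerates: for $\tilde{\rho}=0$ the limit in \eqref{third order condition} is the derivative-type function containing $x^{\rho}\log x$, which cannot be absorbed by your rescaling, whose effect on the limit function is a shift proportional to $p(x)$; the citation route covers $\tilde{\rho}\le 0$ in one stroke. Second, as written your argument needs the uniform inequality for the \emph{original} pair with limit $\phi$ and majorant $x^{\rho+\tilde{\rho}}\max(x^{\delta},x^{-\delta})$, which is not literally the second-order inequality of Theorem 2.3.9/B.2.18 in \citet{de2006extreme} (those carry a different limit and error shape); pinning down the precise source, or recognizing the reduction to second-order ERV of $f$ as above, is what would make the proposal airtight.
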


\begin{proof}[Proof of Lemma \ref{lemma for third order equality}]
	This lemma follows from applying Theorem B.3.10 in \citet{de2006extreme} to the function $f(t):=\log U(t)-\gamma \log t$.
\end{proof}

\subsection{Proofs for Section 3}
Recall that $U=\set{1/(1-F)}^{\leftarrow}$. Then $X\stackrel{d}{=}U(Y)$, where $Y$ follows the Pareto (1) distribution. Since we have i.i.d. observations $\set{X_1,\dots,X_N}$, we can write $X_i\stackrel{d}{=}U(Y_i)$, where $\set{Y_1,\dots,Y_N}$ is a random sample of $Y$. Recall that the $N$ observations are stored in $m$ machines with $n$ observations each. For machine $j$, let $Y_j^{(1)} \ge \cdots\ge Y_j^{(n)}$ denote the order statistics of the $n$ Pareto (1) distributed variables corresponding to the $n$ observations in this machine. Then $M_j^{(i)}\stackrel{d}{=}U(Y_j^{(i)}), i=1,\dots,n, j=1,\dots,m$.

\begin{proof}[Proof of Proposition \ref{theorem for expansion}]
	 We intend to replace $t$ and $tx$ in Lemma \ref{lemma for third order equality} by $n/k$ and $Y_j^{(i)}, i=1,\dots,k+1, j=1,\dots,m$, respectively. For this purpose, we introduce the set
	$$
	\mathcal{F}_{t_0}:=\set{Y_j^{(k+1)}\ge t_0, \ for \ all \ 1\le j\le m}.
	$$
By Lemma S.2 in the supplementary material of \citet{chen2021distributed}, we have that for any $t_0>1$, if condition \eqref{Condtion of m,n} holds, then
$
\lim_{N\to \infty}\mathbb{P}\suit{\mathcal{F}_{t_0}}=1.
$
Then, we can apply the intended replacement to get that, as $N\to \infty$,
\begin{equation}\label{Third order expansion}
\begin{aligned}
\log U(Y_j^{(i)}) -\log U(n/k)& =-\gamma \log \suit{kY_j^{(i)}/n} -A_0(n/k)\set{\suit{kY_j^{(i)}/n}^{\rho}-1}/\rho \\
&\quad + A_0(n/k)B_0(n/k)\set{\suit{kY_j^{(i)}/n}^{\rho+\tilde{\rho}}-1}/\suit{\rho+\tilde{\rho}}\\
& \quad +o_P(1)A_0(n/k)B_0(n/k)\suit{kY_j^{(i)}/n}^{\rho+\tilde{\rho} \pm \delta},
\end{aligned}
\end{equation}
where the $o_P(1)$ term is uniform for all $1\le i\le k+1$ and $1\le j\le m$.
By applying \eqref{Third order expansion} twice for a general $i$ and $i=k+1$ and the inequality $x^{\rho\pm \delta}/y^{\rho \pm \delta}\le (x/y)^{\rho \pm \delta}$ for any $x,y>0$, we get that as $N \to \infty$,
	\begin{equation}\label{2_Third order expansion}
	\begin{aligned}
	&\quad \log U\left(Y_{j}^{(i)}\right) -\log U\left(Y_{j}^{(k+1)}\right) \\
	 & \quad =\gamma\left(\log Y_{j}^{(i)}-\log Y_{j}^{(k+1)}\right) \\
	&\qquad + A_{0}(n / k) \left(k Y_{j}^{(k+1)} / n\right)^{\rho}\set{\left( Y_{j}^{(i)} / Y_{j}^{(k+1)} \right)^{\rho}-1}/{\rho}\\
	&\qquad  + A_{0}(n / k) B_{0}(n / k)\left(k Y_{j}^{(k+1)} / n\right)^{\rho+\tilde{\rho}}\set{\left( Y_{j}^{(i)} / Y_{j}^{(k+1)} \right)^{\rho+\tilde{\rho}}-1}/\suit{\rho+\tilde{\rho}} \\
	&\qquad + o_P(1)A_{0}(n / k)B_0(n/k) \left(k Y_{j}^{(k+1)}/n\right)^{\rho+\tilde{\rho} \pm \delta}\set{\left( Y_{j}^{(i)}/Y_j^{(k+1)}\right)^{\rho +\tilde{\rho} \pm \delta}+1}.
	\end{aligned}
\end{equation}
By taking the average across $i$ and $j$, we obtain that
$$
\begin{aligned}
&\sqrt{km}\suit{R_k^{(1)}-\gamma}	\\
&\quad=  \gamma \sqrt{km}\frac{1}{m}\frac{1}{k}\sum_{j=1}^m \sum_{i=1}^k \set{\log \suit{Y_j^{(i)}/Y_j^{(k+1)}}-\gamma}\\
&\qquad + \sqrt{km}A_0(n/k)\frac{1}{m}\sum_{j=1}^m \left(k Y_{j}^{(k+1)} / n\right)^{\rho} \rho^{-1} \frac{1}{k}\sum_{i=1}^k \set{\left( Y_{j}^{(i)} / Y_{j}^{(k+1)} \right)^{\rho}-1}\\
& \qquad + \sqrt{km}A_0(n/k)B_0(n/k)\frac{1}{m}\sum_{j=1}^m \left(k Y_{j}^{(k+1)} / n\right)^{\rho+\tilde{\rho}} (\rho+\tilde{\rho})^{-1}\frac{1}{k}\sum_{i=1}^k \set{\left( Y_{j}^{(i)} / Y_{j}^{(k+1)} \right)^{\rho+\tilde{\rho}}-1}\\
&\qquad +o_P(1)\sqrt{km}A_{0}(n / k)B_0(n/k) \frac{1}{m}\sum_{j=1}^m \left(k Y_{j}^{(k+1)} / n\right)^{\rho+\tilde{\rho}\pm \delta}\frac{1}{k}\sum_{i=1}^k \set{\left( Y_{j}^{(i)}/Y_j^{(k+1)}\right)^{\rho +\tilde{\rho} \pm \delta}+1}\\
&\quad=: I_1+I_2+I_3+I_4.
\end{aligned}
$$
Firstly, we handle $I_1$. By Lemma \ref{Lemma for Renyi}, we have that, 
$$
I_1 \stackrel{d}{=} \gamma\sqrt{km}\suit{ \frac{1}{km}\sum_{j=1}^m\sum_{i=1}^k \log Y_{i}^{j,*}-1},
$$
where $Y_{i}^{j,*}, i=1,\dots,k,j=1,\dots,m$ are independent and identically distributed  Pareto (1) random variables.
The central limit theorem yields that as $N \to \infty$,
$
	I_1 = \gamma P_N^{(1)} +o_P(1),
$
where $P_N^{(1)}\sim N(0,1)$.

For $I_2$,  write 
$\delta_{j,n}=\left(k Y_{j}^{(k+1)} / n\right)^{\rho}(k\rho)^{-1}\sum_{i=1}^k \set{\left( Y_{j}^{(i)} / Y_{j}^{(k+1)} \right)^{\rho}-1}$.  Then we have that $I_2=\sqrt{km}A_0(n/k)m^{-1}\sum_{j=1}^m \delta_{j,n}$, where $\delta_{j,n}, j=1,\dots,m$ are i.i.d. random variables.

We are going to show that, as $N \to \infty$,
\begin{equation}\label{CLT for Triangular}
	\sqrt{km} \set{\frac{1}{m}\sum_{j=1}^m \delta_{j,n} -\mathbb{E}\suit{\delta_{j,n}} }=O_P(1).
\end{equation}

If $k$ is fixed, \eqref{CLT for Triangular} follows directly from Lemma \ref{Lemma for Lyapunov} (i) and the  Lyapunov central limit theorem for triangular array.

Next, we handle the case when $k$ is an intermediate sequence. In this case, in order to apply the  Lyapunov central limit theorem with $4$-th moment, we need to calculate $\text{Var}\suit{\delta_{j,n}}$ and $\mathbb{E}\bra{\set{\delta_{j,n}-\mathbb{E}\suit{\delta_{j,n}}}^4}$.
Denote $m_{n}^{(a)}:=\mathbb{E}\set{\suit{\delta_{j,n}}^a},\ a=1,2,3,4$. 
By Lemma \ref{Lemma for Renyi}, we have that, 
$$
m_n^{(a)}= g(k,n,a\rho)\mathbb{E} \bra{\set{\frac{1}{k}\sum_{i=1}^k\frac{\suit{Y_{j}^{(i)} / Y_{j}^{(k+1)}}^{\rho}-1}{\rho}}^a }.
$$

First, we calculate $\text{Var}\suit{\delta_{j,n}}$. By Lemma \ref{Lemma for Lyapunov}, we have that,
$$
\begin{aligned}
\text{Var}(\delta_{j,n})&=m_{n}^{(2)}-\suit{m_{n}^{(1)}}^2  \\
&= g(k,n,2\rho)\set{\frac{1}{(1-\rho)^2}+\frac{1}{k}\frac{1}{(1-2\rho)(1-\rho)^2}+O(k^{-2})}-\set{g(k,n,\rho)}^2\set{\frac{1}{(1-\rho)^2}+O(k^{-2})}\\
&= \frac{1}{k} g(k,n,2\rho)\frac{1}{(1-2\rho)(1-\rho)^2}+\bra{g(k,n,2\rho)-\set{g(k,n,\rho)}^2}\frac{1}{(1-\rho)^2}+O(k^{-2}),
\end{aligned}
$$
here in the last step, we used the fact that as $n\to \infty$, $g(k,n,\rho)\to 1$ and $g(k,n,2\rho)\to 1$.
By Lemma \ref{Lemma for Expectation}, we have that, as $n \to \infty$, 
$$
\begin{aligned}
g(k,n,2\rho)-\set{g(k,n,\rho)}^2&=1+\frac{1}{2}\suit{4\rho^2-2\rho}\frac{1}{k} +o(k^{-1})-\set{1+\frac{1}{2}\suit{\rho^2-\rho}\frac{1}{k}+o(k^{-1})}^2=\frac{1}{k}\rho^2+o(k^{-1}).
\end{aligned}
$$ 
Hence, as $n\to \infty$, $\text{Var}\suit{\delta_{j,n}}=k^{-1}(1-\rho)^{-2}\suit{\suit{1-2\rho}^{-1}+\rho^2}+o(k^{-1})$.
 
 Next, we calculate $\mathbb{E}\bra{\set{\delta_{j,n}-\mathbb{E}\suit{\delta_{j,n}}}^4}$.
 By Lemma \ref{Lemma for Expectation} and Lemma \ref{Lemma for Lyapunov}, we have that, for $a=3,4$, as $N\to \infty$, 
$$
\begin{aligned}
m_{n}^{(a)}&=(1-\rho)^{-a}\set{1+\frac{1}{2}\frac{1}{k}\frac{a(a-1)}{1-2\rho}+O(k^{-2})}\set{1+\frac{1}{2}(a^2\rho^2-a\rho)k^{-1}-\frac{1}{2}(a^2\rho^2-a\rho)(n-a\rho)^{-1}+O(k^{-2})}\\
&=(1-\rho)^{-a}\set{1+k^{-1}\frac{1}{2}\frac{a(a-1)}{1-2\rho}+\frac{1}{2}(a^2\rho^2-a\rho)k^{-1}-\frac{1}{2}(a^2\rho^2-a\rho)(n-a\rho)^{-1}+O(k^{-2})}.
\end{aligned}
$$
Note that, 
$$
\mathbb{E}\bra{\set{(\delta_{j,n}-\mathbb{E}\suit{\delta_{j,n}}}^4}=m_n^{(4)}-4m_n^{(3)}m_n^{(1)}+6m_n^{(2)}\suit{m_n^{(1)}}^2-3\suit{m_n^{(1)}}^4.
$$
By some direct calculation, all terms of order $k^{-1}$ and $n^{-1}$ are cancelled  out. 
Thus, as $N \to \infty$, $\mathbb{E}\bra{\set{(\delta_{j,n}-\mathbb{E}\suit{\delta_{j,n}}}^4} =O(k^{-2}). $
Combining $\text{Var}(\delta_{j,n})$ and $\mathbb{E}\bra{\set{\delta_{j,n}-\mathbb{E}\suit{\delta_{j,n}}}^4}$,  we conclude that the sequences $\set{\delta_{j,n}}_{j=1}^m$  satisfy the Lyapunov's condition.  Then, \eqref{CLT for Triangular} follows by the central limit theorem.
Applying \eqref{CLT for Triangular}, we obtain that, as $N\to \infty$,
$$
I_2 = \sqrt{km}A_0(n/k) \set{\mathbb{E}\suit{\delta_{j,n}}+O_P(1/\sqrt{km})}=\frac{g(k,n,\rho)}{1-\rho}\sqrt{km}A_0(n/k)+o_P(1).
$$

For $I_3$, by using the weak law of large numbers for triangular array, we have that, as $N \to \infty$,								
$$
\begin{aligned}
I_3 &= \frac{\sqrt{km}A_0(n/k)B_0(n/k)}{1-\rho-\tilde{\rho}}\mathbb{E}\set{\suit{kY_1^{(k+1)}/n}^{\rho+\tilde{\rho}}}\set{1+o_P(1)}\\
&=\sqrt{km}A_0(n/k)B_0(n/k)\frac{g(k,n,\rho+\tilde{\rho})}{1-\rho-\tilde{\rho} }+o_P(1),
\end{aligned}
$$
where the last equality follows by the condition  $\sqrt{km}A(n/k)B(n/k)=O(1)$.

For $I_4$, by similar arguments as for $I_3$, we obtain that, as $N\to \infty$, $I_4\stackrel{P}{\to} 0$.
Combining $I_1,I_2,I_3$ and $I_4$, we have proved (i).

Next, we handle $R_k^{(2)}$.
By \eqref{2_Third order expansion}, we obtain that, as $N\to\infty,$ 
$$
\begin{small}
\begin{aligned}
&\sqrt{km}\suit{R_k^{(2)}-2\gamma^2} \\
&= \gamma^2 \frac{1}{mk}\sum_{j=1}^m \sum_{i=1}^k \set{\log^2 \suit{Y_j^{(i)}/Y_j^{(k+1)}}-2}\\
& \quad + 2\gamma\sqrt{km}A_0(n/k)\frac{1}{km}\sum_{j=1}^m \suit{kY_j^{(k+1)}/n}^{\rho}\sum_{i=1}^k \log \suit{Y_j^{(i)}/Y_j^{(k+1)}}\set{\suit{Y_j^{(i)}/Y_j^{(k+1)}}^{\rho}-1}/\rho \\ 
& \quad +\sqrt{km} A_0^2(n/k) \frac{1}{km}\sum_{j=1}^m \suit{kY_j^{(k+1)}/n}^{2\rho}\sum_{i=1}^k \set{\suit{Y_j^{(i)}/Y_j^{(k+1)}}^{\rho}-1}^2/\rho^2 \\
& \quad +2\gamma\sqrt{km}A_0(n/k) B_0(n/k) \frac{1}{km}  \sum_{j=1}^m \suit{kY_j^{(k+1)}/n}^{\rho+\tilde{\rho}}\sum_{i=1}^k \log \suit{Y_j^{(i)}/Y_j^{(k+1)}}\frac{\suit{Y_j^{(i)}/Y_j^{(k+1)}}^{\rho+\tilde{\rho}}-1}{\rho+\tilde{\rho}}\\
&\quad + o_P(1)\\
&=: I_5+I_6+I_7+I_8+o_P(1).
\end{aligned}
\end{small}
$$

For $I_5$, by Lemma \ref{Lemma for Renyi}, we have that 
$$
I_5\stackrel{d}{=}\gamma^2\sqrt{km}\set{ \frac{1}{km}\sum_{j=1}^m\sum_{i=1}^k\suit{\log Y_{i}^{j,*}}^2-2}.
$$
The central limit theorem yields that as $N \to \infty$, 
$I_5 =  \gamma^2 P_N^{(2)}+o_P(1)$, where $P_N^{(2)}\sim N(0,20)$. 
 In addition, the covariance of $P_N^{(1)}$ and $P_N^{(2)}$ is equal to the covariance of $\log Y_i^{j,*}$ and $\suit{\log Y_i^{j,*}}^2$, where $Y_{i}^{j,*}$ follows the  Pareto (1) distribution. Hence,
$\text{Cov}(P_N^{(1)},P_N^{(2)})=4.$

For $I_6$, we write $I_6 = 2\sqrt{km}A_0(n/k)m^{-1}\sum_{j=1}^m \eta_{j,n}$, where
$$
\eta_{j,n}=\left(k Y_{j}^{(k+1)} / n\right)^{\rho}(k\rho)^{-1}\sum_{i=1}^k \log \left( Y_{j}^{(i)} / Y_{j}^{(k+1)} \right)\set{\left( Y_{j}^{(i)} / Y_{j}^{(k+1)} \right)^{\rho}-1}
$$
are i.i.d. random variables for $j=1,2,\dots,m$.	
We can verify the Lyapunov's condition for the series $\set{\eta_{j,n}}_{j=1}^m$  following similar steps as those for  $\set{\delta_{j,n}}_{j=1}^m$.  Then by applying the  central limit theorem and Lemma \ref{Lemma for Expectation}, we obtain that 
$$
I_6=2\gamma\sqrt{km}A_0(n/k) g(k,n,\rho)\frac{1}{\rho}\set{\frac{1}{(1-\rho)^2}-1}+o_P(1).
$$
By the weak law of large numbers for triangular array, we have that 
$$
I_7=\sqrt{km}A_0^2(n/k)\frac{g(k,n,2\rho)}{\rho^2}\set{\frac{1}{1-2\rho}-\frac{2}{1-\rho}+1}+o_P(1),
$$
and 
$$
I_8=2\gamma \sqrt{km}A_0(n/k)B_0(n/k)\frac{g(k,n,\rho+\tilde{\rho})}{\rho+\tilde{\rho}}\set{\frac{1}{(1-\rho-\tilde{\rho})^2}-1}+o_P(1).
$$
Combining the results for $I_5,I_6,I_7$ and $I_8$, we have proved (ii).

Finally, we handle $R_k^{(3)}$. Also, by \eqref{2_Third order expansion}, we have that
$$
\begin{small}
\begin{aligned}
	&\sqrt{km}\suit{R_k^{(3)}-6\gamma^3}\\
	 &= \gamma^3 \frac{1}{mk}\sum_{j=1}^m \sum_{i=1}^k \set{\log^3 \suit{Y_j^{(i)}/Y_j^{(k+1)}}-6}\\
	& \quad + 3\gamma^2\sqrt{km}A_0(n/k)\frac{1}{km}\sum_{j=1}^m \suit{kY_j^{(k+1)}/n}^{\rho}\sum_{i=1}^k \set{\log \suit{Y_j^{(i)}/Y_j^{(k+1)}}}^2\frac{\suit{Y_j^{(i)}/Y_j^{(k+1)}}^{\rho}-1}{\rho} \\ 
	& \quad + 3\gamma\sqrt{km}A_0^2(n/k)\frac{1}{km}\sum_{j=1}^m \suit{kY_j^{(k+1)}/n}^{2\rho}\sum_{i=1}^k \log \suit{Y_j^{(i)}/Y_j^{(k+1)}}\set{\frac{\suit{Y_j^{(i)}/Y_j^{(k+1)}}^{\rho}-1}{\rho}}^2\\
	& \quad +3\gamma^2\sqrt{km}A_0(n/k) B_0(n/k) \frac{1}{km}  \sum_{j=1}^m \suit{kY_j^{(k+1)}/n}^{\rho+\tilde{\rho}}\sum_{i=1}^k \set{\log \suit{Y_j^{(i)}/Y_j^{(k+1)}}}^2\frac{\suit{Y_j^{(i)}/Y_j^{(k+1)}}^{\rho+\tilde{\rho}}-1}{\rho+\tilde{\rho}}\\
	&\quad +o_P(1)\\
	&=: I_9+I_{10}+I_{11}+I_{12}+o_P(1).
	\end{aligned}
\end{small}
	$$
By similar steps as for handling the four items $I_5, I_6,I_7$ and $I_8$, we can show that 
 $I_9 =  \gamma^3 P_N^{(3)}+o_P(1) $,
 where $P_N^{(3)}\sim N(0,684)$ and
 $\text{Cov}(P_N^{(1)},P_N^{(3)})=18, \text{Cov}(P_N^{(2)},P_N^{(3)})=98. $ 
And 
$$
\begin{aligned}
I_{10}&=6\gamma^2 \sqrt{km}A_0(n/k)\frac{g(k,n,\rho)}{\rho}\set{\frac{1}{(1-\rho)^3}-1}+o_P(1),\\
I_{11}&=3\gamma\sqrt{km} A^2_0(n/k) \frac{g(k,n,2\rho)}{\rho^2}\set{\frac{1}{(1-2\rho)^2}-\frac{2}{(1-\rho)^2}+1}+o_P(1),\\
I_{12}&=6\gamma^2\sqrt{km}A_0(n/k)B_0(n/k)\frac{g(k,n,\rho+\tilde{\rho})}{\rho+\tilde{\rho}}\set{\frac{1}{(1-\rho-\tilde{\rho})^3}-1}+o_P(1),
\end{aligned}
$$
which yields (iii).
\end{proof}

\begin{proof}[Proof of Theorem \ref{Theorem of rho}]
	Applying Proposition \ref{theorem for expansion} with $k=k_{\rho}$, we have that, as $N \to \infty$,
	$$
\begin{aligned}
R_{k_{\rho}}^{(1)}&=\gamma+\frac{\gamma}{\sqrt{k_{\rho}m}} P_N^{(1)} + \frac{g(k_{\rho},n,\rho)}{1-\rho}A_0(n/k_{\rho})+\frac{g(k_{\rho},n,\rho+\tilde{\rho})}{1-\rho-\tilde{\rho}}A_0(n/k_{\rho})B_0(n/k_{\rho})+\frac{1}{\sqrt{k_{\rho}m}}o_P(1),\\
	R_{k_{\rho}}^{(2)}& =2\gamma^2+\frac{\gamma^2}{\sqrt{k_{\rho}m}} P_N^{(2)} +2 \gamma A_0(n/k_{\rho})\frac{g(k_{\rho},n,\rho)}{\rho}\set{\frac{1}{(1-\rho)^2}-1}\\
	&+A_0^2(n/k_{\rho})\dfrac{g(k_{\rho},n,2\rho)}{\rho^2}\suit{\frac{1}{1-2\rho}-\frac{2}{1-\rho}+1}   \\
		&+2\gamma A_0(n/k_{\rho})B_0(n/k_{\rho})\frac{g(k_{\rho},n,\rho+\tilde{\rho})}{\rho+\tilde{\rho}}\set{\frac{1}{(1-\rho-\tilde{\rho})^2}-1}+\frac{1}{\sqrt{k_{\rho}m}}o_P(1),\\
	R_{k_{\rho}}^{(3)}&= 6\gamma^3+\frac{\gamma^3}{\sqrt{k_{\rho}m}}P_N^{(3)}+6\gamma A_0(n/k_{\rho})\frac{g(k_{\rho},n,\rho)}{\rho}\set{\frac{1}{(1-\rho)^3}-1} \\
	&\quad  +3A_0^2(n/k_{\rho})\frac{g(k_{\rho},n,2\rho)}{\rho^2}\set{\frac{1}{(1-2\rho)^2}-\frac{2}{(1-\rho)^2}+1}\\
	&\quad +6\gamma A_0(n/k_{\rho})B_0(n/k_{\rho})\frac{g(k_{\rho},n,\rho+\tilde{\rho})}{\rho+\tilde{\rho}}\set{\frac{1}{(1-\rho-\tilde{\rho})^3}-1}+\frac{1}{\sqrt{k_{\rho}m}}o_P(1).
	\end{aligned}
	$$
As a consequence, we have that, as $N \to\infty$,  
$$
\begin{aligned}
	\suit{R_{k_{\rho}}^{(1)}}^{\tau}&=\gamma^{\tau}\set{1+ \frac{\tau}{\sqrt{k_{\rho}m}} P_N^{(1)} +\frac{\tau}{\gamma}\frac{g(k_{\rho},n,\rho)}{1-\rho} A_0(n/k_{\rho})+\frac{\tau}{\gamma}\frac{g(k_{\rho},n,\rho+\tilde{\rho})}{1-\rho-\tilde{\rho}}A_0(n/k_{\rho})B_0(n/k_{\rho})} \\
	&\quad +\frac{1}{\sqrt{k_{\rho}m}}o_P(1),\\
	\suit{R_{k_{\rho}}^{(2)}/2}^{\tau/2}& =\gamma^{\tau}\Bigg[1+\frac{\tau}{\sqrt{k_{\rho}m}} P_N^{(2)} +\frac{\tau}{2\gamma}A_0(n/k_{\rho})\frac{g(k_{\rho},n,\rho)}{\rho}\set{\frac{1}{(1-\rho)^2}-1}\\
	&\quad + \frac{\tau}{4\gamma} A_0^2(n/k_{\rho})\frac{g(k_{\rho},n,2\rho)}{\rho^2}\suit{\frac{1}{1-2\rho}-\frac{2}{1-\rho}+1}\\
	&\quad  +\frac{\tau}{2\gamma}A_0(n/k_{\rho})B_0(n/k_{\rho})\frac{g(k_{\rho},n,\rho+\tilde{\rho})}{\rho+\tilde{\rho}}\set{\frac{1}{(1-\rho-\tilde{\rho})^2}-1}\Bigg]+\frac{1}{\sqrt{k_{\rho}m}}o_P(1),\\
	\suit{R_{k_{\rho}}^{(3)}/6}^{\tau/3}&=\gamma^{\tau}\Bigg[1+\frac{\tau}{\sqrt{k_{\rho}m}}P_N^{(3)}+\frac{\tau}{3\gamma} A_0(n/k_{\rho})\frac{g(k_{\rho},n,\rho)}{\rho}\set{\frac{1}{(1-\rho)^3}-1} \\
	&\quad  +\frac{\tau}{6\gamma} A_0^2(n/k_{\rho})\frac{g(k_{\rho},n,2\rho)}{\rho^2}\set{\frac{1}{(1-2\rho)^2}-\frac{2}{(1-\rho)^2}+1}\\
	&\quad +\frac{\tau}{3\gamma} A_0(n/k_{\rho})B_0(n/k_{\rho})\frac{g(k_{\rho},n,\rho+\tilde{\rho})}{\rho+\tilde{\rho}}\set{\frac{1}{(1-\rho-\tilde{\rho})^3}-1}\Bigg ]+\frac{1}{\sqrt{k_{\rho}m}}o_P(1).
\end{aligned}
$$
It follows that, as $N\to\infty$,
$$
\begin{aligned}
	\gamma^{-\tau}\set{\suit{R_{k_{\rho}}^{(1)}}^{\tau} -\suit{R_{k_{\rho}}^{(2)}/2}^{\tau/2}} &= \frac{\tau}{\sqrt{k_{\rho}m}}\suit{P_N^{(1)}-P_N^{(2)}}+\frac{\tau}{\gamma}g(k_{\rho},n,\rho)A_0(n/k_{\rho})\frac{-\rho}{2(1-\rho)^2} \\
&+A_0^2\suit{n/k_{\rho}}O(1)+A_0\suit{n/k_{\rho}}B_0\suit{n/k_{\rho}}O(1)+\frac{1}{\sqrt{k_{\rho}m}}o_P(1),
\end{aligned}
$$
and
$$
\begin{aligned}
	\gamma^{-\tau}\set{\suit{R_{k_{\rho}}^{(2)}/2}^{\tau/2} -\suit{R_{k_{\rho}}^{(2)}/6}^{\tau/3}} &= \frac{\tau}{\sqrt{k_{\rho}m}}\suit{P_N^{(2)}-P_N^{(3)}}+\frac{\tau}{\gamma}g(k_{\rho},n,\rho)A_0(n/k_{\rho})\frac{\rho(\rho-3)}{6(1-\rho)^3} \\
	&+A_0^2\suit{n/k_{\rho}}O(1)+A_0\suit{n/k_{\rho}}B_0\suit{n/k_{\rho}}O(1)+\frac{1}{\sqrt{k_{\rho}m}}o_P(1).
\end{aligned}
$$
By the condition \eqref{k_rho_condition}, the dominating terms in the two expressions above are 
$$\frac{\tau}{\gamma}g(k_{\rho},n,\rho)A_0(n/k_{\rho})\frac{-\rho}{2(1-\rho)^2} 
\quad \text{and} 
\quad  \frac{\tau}{\gamma}g(k_{\rho},n,\rho)A_0(n/k_{\rho})\frac{\rho(\rho-3)}{6(1-\rho)^3},
$$
respectively. Therefore, as $N \to \infty$,
$$
\begin{aligned}
	T_{k_{\rho},\tau}&=3\frac{\rho-1}{\rho-3} \set{1+\frac{\gamma}{\sqrt{k_{\rho}m}}\frac{2(1-\rho)^2}{-\rho A_0(n/k_{\rho})}\suit{P_N^{(1)}-P_N^{(2)}}-\frac{\gamma}{\sqrt{k_{\rho}m}A_0(n/k_{\rho})} \frac{6(1-\rho)^3}{\rho^2-3\rho}\suit{P_N^{(2)}-P_N^{(3)}}}\\
	&\quad +O_P\set{A_0\suit{n/k_{\rho}}}+O_P\set{B_0\suit{n/k_{\rho}}}+\frac{1}{\sqrt{k_{\rho}m}A_0(n/k_{\rho})}o_P(1).
\end{aligned}
$$
It follows that as $N\to \infty$,
$$
\sqrt{k_{\rho}m}A_0(n/k_{\rho})\suit{T_{k_{\rho},\tau}-3\frac{\rho-1}{\rho-3}}= -\gamma\frac{2(1-\rho)^2}{\rho }\suit{P_N^{(1)}-P_N^{(2)}}-\gamma \frac{6(1-\rho)^3}{\rho^2-3\rho}\suit{P_N^{(2)}-P_N^{(3)}}+O_P(1).
$$
Theorem \ref{Theorem of rho} is thus proved by applying the Cram\'er's delta method.
\end{proof}

\begin{proof}[Proof of Theorem \ref{Theorem : gamma}]
	By Proposition \ref{theorem for expansion}, as $N 
	\to \infty$, $R_{k_n}^{(1)}$ has the following asymptotic expansion:
	$$
	\sqrt{k_nm}\suit{R_{k_n}^{(1)}-\gamma}-\gamma P_N^{(1)}-\frac{g(k_n,n,\rho)}{1-\rho}\sqrt{k_nm}A_0(n/k_n)=o_P(1),
	$$
	which leads to 
	$$
	\sqrt{k_nm}\set{\suit{R_{k_n}^{(1)}}^2-\gamma^2}-2\gamma^2 P_N^{(1)}-2\gamma\frac{g(k_n,n,\rho)}{1-\rho}\sqrt{k_nm}A_0(n/k_n)=o_P(1).
	$$
Together with the asymptotic expansion of $R_{k_n}^{(2)}$, we have that, as $N\to \infty$,
$$
\sqrt{k_nm}\set{R_{k_n}^{(2)}-2\suit{R_{k_n}^{(1)}}^2 }-\gamma^2\suit{P_N^{(2)}-4P_N^{(1)}}-\sqrt{k_nm}A_0(n/k_n) g(k_n,n,\rho)\frac{2\gamma\rho}{(1-\rho)^2}=o_P(1).
$$
Thus, as $N\to \infty$,
$$
\begin{aligned}
	&\sqrt{k_nm}\suit{\tilde{\gamma}_{k_n,k_{\rho},\tau}-\gamma}
	\\
	&\quad =\sqrt{k_nm}\suit{R_{k_n}^{(1)}-\gamma}- \frac{1}{2R_{k_n}^{(1)}\hat{\rho}_{k_{\rho},\tau} (1-\hat{\rho}_{k_{\rho},\tau})^{-1}}\sqrt{k_n m}\set{R_{k_n}^{(2)}-2\suit{R_{k_n}^{(1)}}^2}\\
	&\quad = \gamma P_N^{(1)} +\sqrt{k_nm}A_0(n/k_n)\frac{g(k_n,n,\rho)}{1-\rho}+o_P(1)\\
	&\quad \quad -\frac{1}{2R_{k_n}^{(1)}\hat{\rho}_{k_{\rho},\tau} (1-\hat{\rho}_{k_{\rho},\tau})^{-1}}\set{\gamma^2\suit{P_N^{(2)}-4P_N^{(1)}}+\sqrt{k_nm}A_0(n/k_n) g(k_n,n,\rho)\frac{2\gamma\rho}{(1-\rho)^2}+o_P(1)}\\
	&\quad = \gamma P_N^{(1)}-\frac{\gamma^2(1-\hat{\rho}_{k_{\rho},\tau})}{R_{k_n}^{(1)}\hat{\rho}_{k_{\rho},\tau}}\suit{P_N^{(2)}/2-2P_N^{(1)}}\\
	&\quad \quad +\sqrt{k_n m}A_0(n/k_n)\frac{\rho}{(1-\rho)^2}g(k_n,n,\rho) 	\suit{\frac{1-\rho}{\rho}-\frac{1-\hat{\rho}_{k_{\rho},\tau}}{\hat{\rho}_{k_{\rho},\tau}}}+o_P(1).
\end{aligned}
$$
The relation $k_n/k_{\rho}\to 0$ implies that $A(n/k_n)/A(n/k_{\rho})\to 0$ as $N \to \infty$. Thus, by Theorem \ref{Theorem of rho}, we have that, as $N \to \infty$,
$$
\sqrt{k_n m}A_0(n/k_n)\frac{\rho}{(1-\rho)^2}g(k_n,n,\rho)\suit{\frac{1-\rho}{\rho}-\frac{1-\hat{\rho}_{k_{\rho},\tau}}{\hat{\rho}_{k_{\rho},\tau}}}=o_P(1).
$$
Together with the consistency of $ \hat{\rho}_{k_{\rho},\tau}$ and $R_{k_n}^{(1)}$, we have that, as $N \to \infty$,
$$
\sqrt{k_nm}\suit{\tilde{\gamma}_{k_n,k_{\rho},\tau}-\gamma}= \frac{\gamma}{\rho}\set{P_N^{(2)}(\rho-1)/2+P_N^{(1)}(2-\rho)}+o_P(1).
$$
Combining with Proposition  \ref{theorem for expansion}, we obtain that, as $N\to \infty$,
$$
\sqrt{k_nm}\suit{\tilde{\gamma}_{k_n,k_{\rho},\tau}-\gamma} \stackrel{d}{\to}  N\bra{0,\gamma^2 \set{1+\suit{\rho^{-1}-1}^2}}.
$$
\end{proof}

\bibliographystyle{apalike}
\bibliography{mybib}

\end{document}